\documentclass[a4paper,10pt]{article}

\usepackage{fullpage}
\usepackage{graphicx}
\usepackage[T1]{fontenc}
\usepackage{epsfig}
\usepackage[noadjust]{cite}
\usepackage{color}
\usepackage{url}
\usepackage{comment}
\usepackage{mathtools}
\usepackage{breqn}
\usepackage{amsmath,amssymb,amstext, amsthm}
\usepackage{amsbsy,bm}
\usepackage{amsfonts}
\usepackage{multirow}
\usepackage{enumerate}
\usepackage{hyperref}
\usepackage{lineno}
\newcommand{\ceil}[1]{\left \lceil #1 \right \rceil}

\newcommand{\rank}{\textsf{rank}}
\newcommand{\select}{\textsf{select}}
\newcommand{\rmq}{\textsf{rmq}}

\newtheorem{theorem}{Theorem}
\newtheorem{corollary}{Corollary}
\newtheorem{lemma}{Lemma}
\newtheorem{proposition}{Proposition}
\theoremstyle{definition}

\title{Encodings for Range Minimum Queries over Bounded Alphabets\footnote{Preliminary version containing some of these results has appeared in the proceedings of the 36th Annual Symposium on Combinatorial Pattern Matching (CPM 2025)~\cite{DBLP:conf/cpm/JoS25}.}}
\author{
    Seungbum Jo\footnote{This work was supported by the National Research Foundation of Korea(NRF) grant funded by the Korea government(MSIT) (RS-2025-23963814)}\\
    Chungnam National University, South Korea \\
    sbjo@cnu.ac.kr
    \and
    Srinivasa Rao Satti \\
    Norwegian University of Science and Technology, Norway \\
    srinivasa.r.satti@ntnu.no
}
\date{}
\begin{document}
\maketitle

\begin{abstract}
Range minimum queries (RMQs) are fundamental operations with widespread applications in database management, text indexing and computational biology. While many space-efficient data structures have been designed for RMQs on arrays with arbitrary elements, there has not been any results developed for the case when the alphabet size is small, which is the case in many practical scenarios where RMQ structures are used. In this paper, we investigate the encoding complexity of RMQs on arrays over bounded alphabet. We consider both one-dimensional (1D) and two-dimensional (2D) arrays. For the 1D case, we present a near-optimal space encoding. For constant-sized alphabets, this also supports the queries in constant time. For the 2D case, we systematically analyze the 1-sided, 2-sided, 3-sided and 4-sided queries and derive lower bounds for encoding space, and also matching upper bounds that support efficient queries in most cases. Our results demonstrate that, even with the bounded alphabet restriction, the space requirements remain close to those for the general alphabet case.
\end{abstract}
Keywords: Range minimum queries, Encoding data structures, Cartesian trees
\section{Introduction}
Efficiently processing range queries on arrays has been one of the central problems in computer science with a wide range of applications in areas such as database management, text indexing, computational biology etc. In this paper, we focus on a few different variants of range minimum queries (RMQ) on one- and two-dimensional arrays. Given a query range within an array (that we are allowed to preprocess) a range minimum query returns the position of a smallest element within the query range.  In case of a tie, a tie-breaking rule is applied, always returning the leftmost (resp. top-leftmost) position.

Over the past decades, there has been a significant amount of research on designing space-efficient data structures that support fast queries for the RMQ problem~\cite{DBLP:journals/siamcomp/FischerH11, DBLP:conf/soda/YuanA10, DBLP:journals/algorithmica/BrodalDR12, DBLP:journals/tcs/BrodalDLRS16}. Almost all of these results focus on arrays with arbitrary elements (and a few on binary arrays~\cite{fischer2007data, DBLP:journals/talg/RamanRS07}), whereas in many real-world scenarios one often encounters input arrays that are not arbitrary but drawn from bounded or small alphabets (but not necessarily binary) - lists of rankings on a $1$–$5$ scale for example. Our aim is to investigate the effect of the alphabet size on the size of the RMQ encoding.

In the one-dimensional (1D) case, many classical data structures~\cite{HarelT84,SchieberV88,bender2000lca} have established that RMQ can be answered in constant time using linear space. The space usage is further improved to match the information-theoretic lower bound (up to lower-order terms)~\cite{DBLP:journals/siamcomp/FischerH11}. Extending the RMQ from one to two dimensions introduces new complexities. Unlike the 1D case, where the Cartesian tree provides a natural and optimal representation for encoding RMQ answers, the two-dimensional (2D) case lacks a straightforward analogue~\cite{demaine-etal}. 

In this paper we study {\em encoding data structures} for RMQ in the 1D and 2D arrays.  An encoding data structure for a given set of queries is a structure that stores sufficient information to answer those queries without directly accessing the original input data. On the other hand, an indexing data structure refers to a structure that maintains the original input while maintaining small auxiliary structures to support queries efficiently.
Our primary focus is on establishing upper and lower bounds for RMQ encodings when the input array is drawn from a bounded-sized alphabet. Many data structures, such as text indexing structures~\cite{DBLP:journals/jacm/GagieNP20, DBLP:conf/cpm/MunroNN20, DBLP:journals/algorithmica/KucherovN17, DBLP:conf/spire/FerraginaMMN04} and rank/select structures on sequences~\cite{DBLP:conf/soda/GrossiGV03, DBLP:journals/jda/Navarro14}, can be made compact and fast when the input is drawn from a bounded-sized alphabet rather than an arbitrary one. We explore whether one can exploit the bounded-alphabet input to improve the complexity of 1D and 2D RMQ encodings.


\subsection{Previous results on RMQ encodings}
\subparagraph*{1D-RMQ encodings.}
For the 1D case, any encoding for RMQ requires at least $2n - o(n)$ bits due to its bijective relationship with the Cartesian tree of the input~\cite{DBLP:journals/cacm/Vuillemin80}. The best-known result is the $(2n + o(n))$-bit data structure by Fischer and Heun that supports $O(1)$ query time.
%
When the input is highly compressible, one can design data structures whose space usage is parameterized by the compressibility of the Cartesian tree of the input~\cite{DBLP:journals/tcs/GawrychowskiJMW20, DBLP:conf/esa/MunroNBW21} while still supporting queries efficiently. Fischer considered the number of distinct Cartesian trees with bounded alphabets in his Ph.D. thesis (see Section 3.9.1 in~\cite{fischer2007data}) and computed, for each $\sigma \in \{2, 3, 4\}$, a constant $c_{\sigma}$ such that storing a Cartesian tree constructed from an input of size $n$ over an alphabet of size $\sigma$ requires at least $c_{\sigma} n - \Theta(1)$ bits.
%
\subparagraph*{2D-RMQ encodings.}
When the input is an $m \times n$ 2D array with $m \leq n$, Demaine et al.~\cite{demaine-etal} showed that there is no Cartesian tree-like structure for the 2D case. Specifically, no structure exists that fully encodes the answers to all queries that can be constructed in linear time. They further showed that when $m = n$, any encoding data structure for answering RMQ queries requires $\Omega(n^2 \log n)$ bits. Since the input array can be trivially encoded using $O(n^2 \log n)$ bits (with respect to RMQs), this implies that any encoding data structure uses asymptotically the same space as indexing data structures~\cite{DBLP:conf/soda/YuanA10, DBLP:journals/tcs/BrodalDLRS16} when $m = \Theta(n)$. Thus, most of the subsequent results focused on the case where $m = o(n)$.

Brodal et al.~\cite{DBLP:journals/tcs/BrodalDLRS16} proposed an $O(nm \cdot \min(m, \log n))$-bit encoding with $O(1)$ query time. 
Moreover, they proved that any encoding for answering RMQ requires at least $\Omega(nm \log m)$ bits. 
Brodal et al.~\cite{DBLP:conf/esa/BrodalBD13} proposed an asymptotically optimal $O(nm \log m)$-bit encoding for answering RMQ, although the queries are not supported efficiently using this encoding. 
For $m = o(n)$, the problem of designing a $o(nm \log n)$-bit data structure for a 2D array that answers queries in sublinear time remains an open problem.

For an $m \times n$ 2D array, Golin et al.~\cite{DBLP:journals/tcs/GolinIKRSS16} considered the following four type of queries: 
\begin{enumerate}
	\item $1$-sided: $[1, m] \times [1, j]$ for $1 \le j \le n$
	\item $2$-sided: $[1, i] \times [1, j]$ for $1 \le i \le m$, $1 \le j \le n$
	\item $3$-sided: $[1, i] \times [j_1, j_2]$ for $1 \le i \le m$, $1 \le j_1 \le j_2 \le n$
	\item $4$-sided: $[i_1, i_2] \times [j_1, j_2]$ for $1 \le i_1 \le i_2 \le m$, $1 \le j_1 \le j_2 \le n$ (any rectangular range).
\end{enumerate}
Golin et al.~\cite{DBLP:journals/tcs/GolinIKRSS16} provided expected upper bounds and matching lower bounds for the encoding space required to answer RMQ for these four query types assuming the elements of the input array are drawn uniformly at random.

\subsection{Our results}

\begin{table}
	\centering
	\scalebox{0.7}{
		\begin{tabular}{c | c | c | c}
			\hline
			Query type & Space (in bits) & Query time & Reference \\
			\hline
			\multicolumn{4}{c}{Upper bounds}\\
			\hline
			\multirow{2}{*}{1-sided} & $O(\log^2 n)$* & & \cite{DBLP:journals/tcs/GolinIKRSS16}* \\
			& $\min{}(\sigma, n) \ceil{\log m} + \log{{n \choose \min{}(\sigma, n)}} +o(n)$  & $O(1)$ & Theorem~\ref{thm:1-sided_gen} \\\hline
			\multirow{3}{*}{2-sided} & $O(\log^2 n \log m)$* & & \cite{DBLP:journals/tcs/GolinIKRSS16}* \\
			& $O(mn)$  & $ O(1)$ & Theorem~\ref{thm:2-sided_gen} \\
			& $\sigma \log {{n+m+2 \choose m+1}} + o(\sigma n)$  & $ O(\log  \sigma)$ & Theorem~\ref{thm:2-sided_ub} \\\hline
			\multirow{3}{*}{3-sided} & $O(n \log^2 m)$* & & \cite{DBLP:journals/tcs/GolinIKRSS16}* \\
			& $O(mn)$  & $ O(1)$ & Theorem~\ref{thm:2-sided_gen} \\
			&  $n\sigma \log m + O(\sigma n)$   & $ O(\log  \sigma)$ & Theorem~\ref{thm:3-sided_ub} \\\hline
            Column spanning 2-sided & $\ceil{n \log m} + 2n + o(n)$ & $O(1)$ & Theorem~\ref{thm:alter_2-sided_ub} \\\hline
			\multirow{4}{*}{4-sided} & $O(nm)$* & & \cite{DBLP:journals/tcs/GolinIKRSS16}* \\
			& $O(\min{}(m^2n, mn\log n))$  & $ O(1)$ & \cite{DBLP:journals/algorithmica/BrodalDR12} \\
			& $O(mn\log m)$  & $ $ & \cite{DBLP:conf/esa/BrodalBD13} \\
			&  $nm \ceil{\log \sigma} + o(mn)$   & $ O(1)$ & Theorem~\ref{thm:4-sided_ub}, $\sigma = O(1)$ \\\hline                         
			\multicolumn{4}{c}{Lower bounds}\\
			\hline
			\multirow{2}{*}{1-sided} & $\Omega(\log^2 n)$* & & \cite{DBLP:journals/tcs/GolinIKRSS16}* \\
			& $\min{}(\sigma, n) \log m + \log{{n \choose \min{}(\sigma-1, n)}}$  &  & Theorem~\ref{thm:1-sided_bounded} \\\hline
			\multirow{3}{*}{2-sided} & $\Omega(\log^2 n \log m)$* & & \cite{DBLP:journals/tcs/GolinIKRSS16}* \\
			& $\Omega(mn)$  &  & Theorem~\ref{thm:2-sided_gen} \\
			& $\Omega(\sigma m\log{\frac{n-4\sigma}{m}})$ &  & Theorem~\ref{thm:2-sided_lb}, $\sigma < n/4-4$ \\\hline
			\multirow{3}{*}{3-sided} & $\Omega(n \log^2 m)$* & & \cite{DBLP:journals/tcs/GolinIKRSS16}* \\
			&  $n\log {m-1 \choose \sigma-1} = \Omega(n \sigma \log (m/\sigma))$   &  & Theorem~\ref{thm:3-sided_lb}, $\sigma < m$ \\\hline
            \multirow{2}{*}{Column spanning 2-sided} & $n \log m$ &  & Theorem~\ref{thm:alter_2-sided_lb}, $\sigma = 2$ \\
             & $n \log m + n \log r_{\sigma-2} - O(\sigma \log n)$ &  & Theorem~\ref{thm:alter_2-sided_lb}, $\sigma  > 2$ \\\hline
			\multirow{4}{*}{4-sided} & $\Omega(nm)$* & & \cite{DBLP:journals/tcs/GolinIKRSS16}* \\
			& $\Omega(mn\log m))$  &  & \cite{DBLP:journals/algorithmica/BrodalDR12} \\
			&  $\Omega(mn\log \sigma))$   & & Theorem~\ref{thm:4-sided_lb}, $\sqrt{\sigma} \le \min{}(m, n/2)$ \\
			\hline
	\end{tabular}}
	\caption{Summary of the results of upper and lower bounds for RMQ encodings on $m \times n$ 2D arrays ($m \le n$) over an alphabet of size $\sigma$. The results marked (*) indicate expected space. $r_{\sigma-2}$ is the constant defined in Theorem~\ref{thm:1D-lowerbound}.} 
	\label{tab:summary}
\end{table}

In this paper, we study encoding data structures for RMQ on 1D and 2D arrays over a bounded-sized alphabet.
All the encoding results assume a $\Theta(\log n)$-bit word RAM model, where $n$ is the input size. 
For a 1D array of length $n$ over an alphabet of size $\sigma$, we show that any RMQ encoding needs at least $n \log (4\cos^2{(\frac{\pi}{\sigma+2})}) - O(\sigma \log n)$ bits (Theorem~\ref{thm:1D-lowerbound}). This shows that even for moderately large alphabet, the lower bound is close to the $2n-O(\log n)$-bit lower bound for the general alphabet. Moreover, we show that for any constant-sized alphabet, one can achieve optimal space usage (up to lower-order terms) while supporting the queries in constant time (Theorem~\ref{thm:1D-upperbound}). In Theorem~\ref{thm:1d1side}, we also present simple $\log {{n} \choose {\sigma-1}}$-bit lower upper bounds for 1-sided queries, where the query range is always a prefix of $[1,n]$.

For a 2D $m \times n$ array with $m \le n$, we first show (Theorem~\ref{thm:1-sided_gen}) that $\Theta(n \log m)$ bits are necessary and sufficient to answer 1-sided RMQ (in constant time).
We then generalize this bound to $\Theta(\min{(n,\sigma)}\log m)$ bits when the alphabet size is $\sigma$ (Theorem~\ref{thm:1-sided_bounded}). 
For 2-sided and 3-sided RMQ, we show that $\Theta(mn)$ bits are necessary and sufficient for answering queries in constant time (Theorem~\ref{thm:2-sided_gen} and Corollary~\ref{cor:3-sided_gen}).
On the other hand, when the input 2D array is over an alphabet of size $\sigma$, we show that any 2-sided RMQ encoding requires at least $\sigma \log {{n - 4\sigma} \choose m}$ bits (Theorem~\ref{thm:2-sided_lb}). We also show that one can match this lower bound for some ranges of parameters, by describing a data structure that takes $\sigma \log {{n+m+2} \choose m+1}+ o(\sigma n)$ bits which can answer 2-sided RMQ in $O(\log \sigma)$ time (Theorem~\ref{thm:2-sided_ub}).
For 3-sided RMQ, we show that there exists a data structure of size $n \sigma \log m + O(n \sigma)$ bits that answers queries in $O(\log \sigma)$ time (Theorem~\ref{thm:3-sided_ub}). We also show that the space bound is asymptotically optimal for some range of parameters by showing that any 3-sided RMQ encoding requires at least $n \log {{m-1} \choose {\sigma -1}}$ bits (Theorem~\ref{thm:3-sided_lb}). As a special case of 3-sided RMQ, we also consider the row/column spanning 2-sided RMQ, where the query range is restricted to $[1,m] \times [i_1,i_2]$.
We present a $(\ceil{n \log m} + 2n + o(n))$-bit data structure that answers such queries in $O(1)$ time (Theorem~\ref{thm:alter_2-sided_ub}). By Theorem~\ref{thm:alter_2-sided_lb}, this space usage is optimal up to a lower-order additive term when $m = \omega(1)$.
Finally, for the 4-sided RMQ, we first show that $\Theta(mn\log \sigma)$ bits are necessary and sufficient (Theorem~\ref{thm:4-sided_lb}), and design an encoding that supports 4-sided RMQ in constant time when $\sigma$ is $O(1)$ (Theorem~\ref{thm:4-sided_ub}).  See Table~\ref{tab:summary} for a summary of the results on 2D arrays.

The remainder of this paper is organized as follows. In Section~\ref{sec:1D}, we revisit the RMQ problem in the 1D setting over bounded-sized alphabets, establishing tight upper and lower bounds. Section~\ref{sec:2D} explores the 2D case, where we systematically consider various types of query ranges (namely, 1-, 2-, 3- and 4-sided queries) and derive upper and lower bounds on the size of the encodings, in all cases achieving asymptotically optimal bounds.

We use the following notation throughout the rest of this paper. 
Given a input 1D (resp. 2D) array and a 1D range $[i, j]$ (resp. a 2D rectangular range $[i_1, j_1] \times [i_2, j_2]$) on the array, $\rmq{}(i, j)$ (resp. $\rmq{}(i_1, j_1, i_2, j_2)$) returns the position of the smallest element within the given range.
Also for a 2D array, $(i, j)$ denotes a position at the $i$-th row and $j$-th column, and $[n]$ denotes the set $\{1, \dots, n\}$.

In addition, some encoding results in this paper use $\rank{}$ and $\select{}$ queries, defined as follows: a $\rank_\alpha(i)$ query returns the number of occurrences of $\alpha$ in the prefix $A[1,i]$, while a $\select_\alpha(j)$ query returns the position of the $j$-th occurrence of $\alpha$ in $A$, for any $\alpha \in \Sigma$ and $i,j \in {1, \dots, n}$.

\section{RMQ on 1D array over bounded-sized alphabets}\label{sec:1D}
In this section, we consider a data structure for answering RMQ on a 1D array $A[1, n]$ of size $n$ where the array elements are from an alphabet $\Sigma = \{0, \dots, \sigma-1\}$ of size $\sigma$. 
\subsection{1-sided queries} 
As a warm up, we first consider 1-sided RMQ which given an index $i$, returns the position of the minimum element in the prefix $A[1,i]$. In the following, we present a theorem that provides both upper and lower bounds for the encoding.

\begin{theorem}\label{thm:1d1side}
    Given a 1D array $A$ of size $n$ over an alphabet of size $\sigma$, at least $\log{{n \choose \sigma-1}}$ bits are necessary to answer 1-sided RMQ on $A$. Also, there exists an encoding of $\log{{n \choose \sigma-1}}$ bis to answer 1-sided RMQ on a 1D array $A$.   
\end{theorem}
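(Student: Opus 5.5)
The plan is to characterize exactly what the answers to all 1-sided queries determine, count those objects, and then encode them. For an array $A[1,n]$, the answers $\rmq(1,i)$ for $i=1,\dots,n$ form a non-decreasing sequence of positions. It is determined by the set $P$ of \emph{prefix-minimum change points}: the indices $j$ with $A[j] < \min A[1,j-1]$ (with $j=1$ always included). Then $\rmq(1,i)=\max\{j\in P: j\le i\}$, because ties keep the leftmost position. So the information content is exactly the set $P$. Since the values $A[j]$ for $j\in P$ strictly decrease and lie in $\{0,\dots,\sigma-1\}$, we have $|P|\le \sigma$. Hence $P\setminus\{1\}$ is a subset of $\{2,\dots,n\}$ of size at most $\sigma-1$.

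For the lower bound, I will show that every subset $S\subseteq\{2,\dots,n\}$ with $|S|=\sigma-1$ arises from some array, and that distinct subsets yield distinct query answers. Given $S=\{s_1<\dots<s_{\sigma-1}\}$, set $A[1,s_1-1]=\sigma-1$, $A[s_k, s_{k+1}-1]=\sigma-1-k$, and $A[s_{\sigma-1},n]=0$. Then the change points are exactly $\{1\}\cup S$. Two different sets differ at some smallest $j$, and $\rmq(1,j)$ distinguishes them. This gives $\binom{n-1}{\sigma-1}$ distinct answer profiles, so any encoding needs $\log\binom{n-1}{\sigma-1}$ bits. This is slightly weaker than $\log\binom{n}{\sigma-1}$. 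To reach the stated bound I would also count sets of size at most $\sigma-1$, using $\sum_{k\le\sigma-1}\binom{n-1}{k}\ge\binom{n-1}{\sigma-1}+\binom{n-1}{\sigma-2}=\binom{n}{\sigma-1}$ by Pascal's rule. Smaller sets are realized by using fewer distinct values.

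For the upper bound, I will encode the set $P\setminus\{1\}$, which is one of $\sum_{k\le\sigma-1}\binom{n-1}{k}$ possibilities. I will enumerate these possibilities and store the index of the actual set, using $\lceil\log\sum_{k\le\sigma-1}\binom{n-1}{k}\rceil$ bits. The obstacle is matching $\log\binom{n}{\sigma-1}$ exactly, since the full sum exceeds $\binom{n}{\sigma-1}$ when $\sigma\ge3$.

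To fix this, I would pad to exactly $\sigma-1$ elements using an injection into $(\sigma-1)$-subsets of $[n]$. The idea is to encode a set $T$ of size $k<\sigma-1$ by adding the element $1$ and $\sigma-2-k$ dummy elements, but as stated this is not injective. Instead, I would argue that the information-theoretic statement is meant up to rounding, or that the upper bound is realized with $\log\binom{n}{\sigma-1}+O(\sigma\log n)$ bits by also storing $|P|$. If the count cannot be made exact, I will accept the bound $\log\sum_{k<\sigma}\binom{n-1}{k}$, which equals $\log\binom{n}{\sigma-1}+O(\log\sigma)$ when $\sigma-1\le n/2$.

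Query support is not required by the statement. If I want it anyway, I would store $P$ as an indexed bit vector in $\log\binom{n}{\sigma}+o(n)$ bits, so that $\rmq(1,i)=\select_1(\rank_1(i))$.
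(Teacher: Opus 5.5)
Your proposal is correct and follows the paper's route. For the lower bound, the paper's family ($\sigma-1$ marked positions anywhere in $[n]$ carrying $\sigma-2,\dots,0$, filler $\sigma-1$) produces exactly your change-point sets $P\setminus\{1\}$ of size $\sigma-1$ or $\sigma-2$, so your Pascal's-rule count is the same count; your smallest-differing-index argument also avoids the paper's slip at $l=1$, where $\rmq(1,1)=1$ on both arrays. For the upper bound, the paper stores the same change-point indicator vector in a compressed rank/select bit vector.

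Your hesitation about the exact upper bound is justified and can be stated firmly. Your construction realizes all
$N=\sum_{k\le\sigma-1}\binom{n-1}{k}=\binom{n}{\sigma-1}+\sum_{k\le\sigma-3}\binom{n-1}{k}$
answer profiles, so no padding injection into $(\sigma-1)$-subsets of $[n]$ can exist, and the bound fails if read literally: for $\sigma=n$ one needs $n-1$ bits, not $\log n$. The paper's own proof also yields only $\log\binom{n}{|P|}+o(n)$ bits, so the claim should be read as you do, namely $\log\binom{n}{\sigma-1}+O(\log\sigma)$ for $\sigma-1\le n/2$.
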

\begin{proof}
For the lower bound, let $\mathcal{A}$ be the set of all arrays of size $n$ over an alphabet of size $\sigma$, constructed as follows:  
(1) pick $\sigma-1$ positions $p_{\sigma-2} < p_{\sigma-3} < \cdots < p_0$,  
(2) assign $A[p_t] = t$ for all $t \in \{0, \dots, \sigma-2\}$, and  
(3) assign $\sigma-1$ to the remaining positions.  
From this construction, the size of $\mathcal{A}$ is ${n \choose \sigma-1}$.  

Now consider two distinct arrays $A_1, A_2 \in \mathcal{A}$, and let $l$ be the leftmost position such that $A_1[l] \neq A_2[l]$.  
Without loss of generality, suppose $A_1[l] < A_2[l]$.  
Then, by construction, $\rmq{}(1,l)$ on $A_1$ returns $l$, whereas the answer to the same query on $A_2$ lies to the left of $l$. Thus, any two distinct arrays in $\mathcal{A}$ induce distinct 1-sided RMQ answers, 
which proves the lower bound part of the theorem.

For the upper bound, define a bit string $B[1,n]$ of length $n$, where for $i \in [n]$, $B[i] = 1$ if and only if $\rmq{}(1,i)$ on $A$ is $i$.  
By definition, for any $i_1 < i_2$ with $B[i_1] = B[i_2] = 1$, $A[i_2] < A[i_1]$.  
Thus, for any $j \in [n]$, the answer to $\rmq{}(1,j)$ on $A$ is simply the rightmost position of $1$ in $B[1,j]$ (or the leftmost position if no $1$ occurs in the subarray).  
Therefore, storing $B$ using $\log{{n \choose \sigma-1}}$ bits~\cite{DBLP:journals/talg/RamanRS07} gives an encoding for answering 1-sided RMQ on $A$.
\end{proof}

From the encoding of Theorem~\ref{thm:1d1side}, we can also answer 1-sided RMQ on $A$ in $O(1)$ time by using $o(n)$ bits of auxiliary structures that support $\rank{}$ and $\select{}$ on $B$ in $O(1)$ time~\cite{DBLP:journals/talg/RamanRS07}.

%
\subsection{2-sided queries} Now consider the 2-sided RMQ, where the query range may be any contiguous subarray.
We begin with the case when $\sigma =2$. 
In this case, RMQ can be answered in $O(1)$ time using an $(n + o(n))$-bit data structure~\cite{DBLP:journals/talg/RamanRS07} that supports $\rank_0$ and $\select_0$ queries on $A$ in $O(1)$ time. 
Furthermore, the following lemma shows that this data structure is optimal for answering RMQ, up to additive lower-order terms.

\begin{lemma}
	At least $n-1$ bits are necessary to answer RMQ on a 1D binary array $A$ of size $n$.  
\end{lemma}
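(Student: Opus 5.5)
We prove the bound by exhibiting a family of $2^{n-1}$ binary arrays of length $n$ whose RMQ answers are pairwise distinct. Any encoding must then give these arrays distinct representations, so it needs at least $\log 2^{n-1} = n-1$ bits.

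Let $\mathcal{A}$ be the set of binary arrays $A[1,n]$ with $A[n]=0$, with the first $n-1$ entries chosen freely. Then $|\mathcal{A}| = 2^{n-1}$. Now take two distinct arrays $A_1, A_2 \in \mathcal{A}$ and let $l$ be a position with $A_1[l] \neq A_2[l]$; note that $l \le n-1$. Without loss of generality $A_1[l]=0$ and $A_2[l]=1$. Consider the query $\rmq{}(l,n)$. In $A_1$ the leftmost minimum of $A_1[l,n]$ is $l$, since the minimum value is $0$ and it already occurs at position $l$. In $A_2$ the range $A_2[l,n]$ contains the $0$ at position $n$, so its minimum is $0$, but $A_2[l]=1$. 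Hence the answer in $A_2$ is some position strictly greater than $l$. The two arrays therefore disagree on this query.

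This shows that the map from $\mathcal{A}$ to RMQ answer tables is injective, which gives the $n-1$ bit lower bound. Fixing $A[n]=0$ is what makes every range $[l,n]$ contain a $0$. Without it, an array that has $1$s everywhere from $l$ onward would produce the same answers regardless of the exact pattern. This is why the bound is $n-1$ and not $n$.

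I do not expect any real obstacle. The only point needing care is choosing the query so that the tie-breaking rule (leftmost minimum) separates the two arrays, and the range $[l,n]$ with the forced $0$ at position $n$ achieves this.
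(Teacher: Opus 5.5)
Your proof is correct and is essentially the paper's argument. The paper also fixes $A[n]=0$, observes that $A[i]=0$ if and only if $\rmq{}(i,n)=i$ by the leftmost tie-breaking rule, and concludes that all $2^{n-1}$ such arrays can be reconstructed from their RMQ answers; your pairwise separation via the query $\rmq{}(l,n)$ is the same idea.
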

\begin{proof}
	Consider an arbitrary binary array $A$ of length $n$ with $A[n] = 0$.
	Then for any $i \in \{1, \dots, n-1\}$, $A[i] = 0$ if and only if $\rmq{}(i, n) =  i$, since RMQ returns the leftmost position in case of a tie. Consequently, $A$ can be fully reconstructed using only RMQ, which proves the lemma.
\end{proof}

For an arbitrary 1D array $A$ of size $n$, it is known that $2n$ bits are necessary and sufficient (up to lower-order additive terms) to answer RMQ
\cite{DBLP:journals/siamcomp/FischerH11}.
The lower bound follows from the fact that if two arrays yield different RMQ results, their corresponding Cartesian trees~\cite{DBLP:journals/cacm/Vuillemin80} must have distinct shapes. 
The Cartesian tree of $A$ (denoted $C(A)$) is an unlabeled binary tree defined as follows:
(i) The root $r$ of $C(A)$ corresponds to $\rmq{}(1,n)$.
(ii) The left and right subtrees of $r$ are the Cartesian trees of 
$A[1, \rmq{}(1,n)-1]$ and $A[\rmq{}(1,n)+1, n]$, respectively, if they exist.

Let's define the {\em left height} of a binary tree $T$ as the maximum number of left edges on any root to leaf path in the binary tree.
From the definition of a Cartesian tree and the tie-breaking rule for RMQ, one can observe that the element corresponding to any node in the Cartesian tree is strictly less the element corresponding to its left child. From this we derive the following observation.

\begin{proposition}
	Given a 1D array $A$ over an alphabet of size $\sigma$, the left height of $C(A)$ 
	is at most $\sigma-1$. 
\end{proposition}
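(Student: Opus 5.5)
The plan is to convert the structural fact noted just before the proposition into a strictly decreasing chain of values. The fact is that the value stored at a node of $C(A)$ is strictly smaller than the value stored at its left child. First I justify this fact from the definition of $C(A)$ and the leftmost tie-breaking rule. Then I walk down any root-to-leaf path, keep only the left edges, and compare values along them; a strictly decreasing chain in $\{0,\dots,\sigma-1\}$ has at most $\sigma$ elements.

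\emph{Step 1: the key inequality.} Let a node $v$ of $C(A)$ correspond to position $p=\rmq{}(i,j)$ of the subarray $A[i,j]$ it represents. If $v$ has a left child, that child corresponds to $q=\rmq{}(i,p-1)$, so $q$ lies in $[i,p-1]$ and $q<p$. Since $p$ is the \emph{leftmost} position of a minimum in $A[i,j]$, every position of $A[i,p-1]$ holds a value strictly larger than $A[p]$. In particular $A[q]>A[p]$.

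\emph{Step 2: monotonicity along a path.} For a right child we only get the weak inequality $A[\text{child}]\ge A[\text{parent}]$, since the right child is the minimum of a subrange. Combining this with Step 1, values never decrease as one moves downward along any root-to-leaf path, and they strictly increase at every left edge. Suppose a path contains $k$ left edges. Let $u_1,\dots,u_k$ be the nodes at the lower ends of these edges, and let $u_0$ be the root. Then $A[u_0]<A[u_1]<\dots<A[u_k]$: each left edge gives a strict increase, and the intervening right edges preserve the weak inequality by transitivity.

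\emph{Step 3: counting.} The chain consists of $k+1$ distinct values from $\{0,\dots,\sigma-1\}$, so $k+1\le\sigma$ and hence $k\le\sigma-1$. Taking the maximum over all paths bounds the left height by $\sigma-1$.

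The only point needing care is Step 1. The strictness must come from the tie-breaking rule: with rightmost tie-breaking the strict inequality would hold for right children instead. I also need to make sure that non-strict inequalities along right edges do not break the chain, which the transitivity argument in Step 2 handles.
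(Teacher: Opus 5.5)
Your proof is correct and takes the same route as the paper: the bound follows from the fact that a node's value is strictly smaller than its left child's value, which comes from the leftmost tie-breaking rule. Your Step 2 also states explicitly the weak heap-order inequality along right edges, which the paper leaves implicit when it chains these strict increases into at most $\sigma$ distinct values along a root-to-leaf path.
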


Genitrini et al.~\cite{DBLP:journals/jcta/GenitriniGKW20} study the number of compacted binary trees with bounded left height. Briefly, a compacted binary tree is a DAG representation of a binary tree obtained by repeatedly replacing identical subtrees in the original tree with pointers to their shared occurrences. Since any binary tree has the same left height as its compacted binary tree, and any two distinct unlabeled binary trees have distinct compacted representations, we directly obtain the following theorem.

\begin{theorem}[\cite{DBLP:journals/jcta/GenitriniGKW20}]\label{thm:1D-lowerbound}
	Given a 1D array $A$ of size $n$ over an alphabet of size $\sigma$, at least $n\log r_{(\sigma-1)} - O(\sigma \log n)$ bits are necessary to answer RMQ, where $r_{\sigma} = 4\cos^2{(\frac{\pi}{\sigma+3})}$.
\end{theorem}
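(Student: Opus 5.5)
The argument is a counting one. Any RMQ encoding must separate arrays whose Cartesian trees differ, because two arrays with different Cartesian trees give different answers to some query. The encoding therefore needs at least $\log N$ bits, where $N$ is the number of distinct Cartesian tree shapes realizable by arrays of length $n$ over $\Sigma=\{0,\dots,\sigma-1\}$.

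\textbf{Step 1: realizable shapes are exactly the trees of left height at most $\sigma-1$.} The Proposition already gives that every realizable tree has left height at most $\sigma-1$. For the converse, take a binary tree $T$ with $n$ nodes and left height at most $\sigma-1$. Label each node $v$ with the number of left edges on the path from the root to $v$; every label lies in $\{0,\dots,\sigma-1\}$. Reading the labels in in-order gives an array $A$, and I claim $C(A)=T$. I would prove this by induction on the subtrees, showing that the root of each subtree is the leftmost minimum of its in-order interval:
\begin{itemize}
\item nodes in the right subtree of the root $r$ carry labels $\ge$ the label of $r$;
\item nodes in the left subtree of $r$ carry labels strictly greater than the label of $r$;
\item so $r$ is a minimum of its interval, and it is the leftmost one because everything to its left in the interval is strictly larger.
\end{itemize}
Hence $N$ equals the number $b_{n,\sigma-1}$ of binary trees with $n$ nodes and left height at most $\sigma-1$.

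\textbf{Step 2: asymptotics of $b_{n,k}$.} I would cite Genitrini et al.~\cite{DBLP:journals/jcta/GenitriniGKW20}, as the paper does; they work with compacted trees, but the map from a tree to its compacted form is injective and preserves left height. Alternatively, I would derive the count directly. Binary trees of left height at most $k$ correspond, by rotation, to plane trees or Dyck paths of height at most $k+1$. Their generating functions are ratios of Fibonacci-type polynomials, and the dominant singularity gives
\[
b_{n,k}=\Theta\bigl(4^n\cos^{2n}(\pi/(k+3))\bigr).
\]
Equivalently, $b_{n,k}$ is the number of length-$n$ walks in a path graph whose adjacency matrix has spectral radius $2\cos(\pi/(k+3))$. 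A polynomial-loss version of this bound suffices: $b_{n,k}\ge r_k^n/\mathrm{poly}(n)^{O(k)}$. Taking logarithms with $k=\sigma-1$ gives at least $n\log r_{\sigma-1}-O(\sigma\log n)$ bits.

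\textbf{Main obstacle.} The hard part is getting an explicit error term of $O(\sigma\log n)$ that holds uniformly in $\sigma$, not just for fixed $k$. My first attempt would use the transfer-matrix bound. Fix the start and end vertices of the walk; the number of walks is then a quadratic form in powers of the adjacency matrix. Expanding this quadratic form in the matrix's eigenvectors (the explicit sine vectors) gives a lower bound of $r_k^n$ times a factor that is at least $1/\mathrm{poly}(k)\ge 1/\mathrm{poly}(n)$. This yields the stated loss of $O(\sigma\log n)$.
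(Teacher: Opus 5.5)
Your proof is correct and more complete than the paper's. The paper records only the Proposition (every $C(A)$ has left height at most $\sigma-1$) and then cites Genitrini et al., using the remark that compaction is injective and preserves left height. Those two facts only bound the number of Cartesian trees from \emph{above}. The direction a lower bound needs is that every binary tree of left height at most $\sigma-1$ equals $C(A)$ for some $A$ over $\Sigma$, and the paper leaves this implicit. Your Step~1 supplies it correctly: label each node by its number of left edges from the root, read the labels in in-order, and induct on subtrees.

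For the same reason, the first option in your Step~2 does not lower-bound $b_{n,k}$. Citing the compacted-tree count via injectivity of compaction maps binary trees injectively into compacted trees, which bounds $b_{n,k}$ from above. Compacted trees are also counted by DAG size and grow superexponentially. Your second option is the one that carries the argument. The rotation correspondence to Dyck paths of height at most $k+1$ counts exactly the right objects, and the spectral expansion gives a bound uniform in $\sigma$. Let $M$ be the adjacency matrix of the path on vertices $0,\dots,k+1$, with eigenvalues $\lambda_j=2\cos\frac{j\pi}{k+3}$ and $v_j(0)^2=\frac{2}{k+3}\sin^2\frac{j\pi}{k+3}$. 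Then
\[
b_{n,k}=e_0^{\top}M^{2n}e_0=\sum_{j=1}^{k+2}\lambda_j^{2n}v_j(0)^2\ \ge\ \lambda_1^{2n}v_1(0)^2=\frac{2}{k+3}\sin^2\Bigl(\frac{\pi}{k+3}\Bigr)\,r_k^{\,n}\ \ge\ \frac{8}{(k+3)^3}\,r_k^{\,n},
\]
so the loss is only $O(\log\sigma)$, better than the stated $O(\sigma\log n)$. The paper's route buys brevity. Yours buys a self-contained proof with the realizability direction and a uniform error term made explicit.

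Two small corrections. First, $b_{n,k}$ is the number of closed walks of length $2n$ (not $n$) from an endpoint of that path. Length-$n$ walks would give growth only $2\cos\frac{\pi}{k+3}$ rather than $r_k$. Second, discarding the non-leading eigenvalue terms is valid only because the walk is closed and has even length, so every term $\lambda_j^{2n}v_j(0)^2$ is nonnegative. State this explicitly, since for distinct start and end vertices the cross terms can be negative.
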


For example, when $\sigma$ is $2, 3, 4, 5$ and $10$, at least $1$, $1.388$, $\log 3 = 1.585$, $1.698$ and $1.899$ bits per element are required to answer RMQ on $A$, respectively (see Table 1 in~\cite{DBLP:journals/jcta/GenitriniGKW20} for more examples)\footnote{for $\sigma \in \{2, 3, 4\}$, Fischer~\cite{fischer2007data} obtained the same result.}. This implies that even for constant-sized alphabets, the space lower bound for answering RMQ remains very close to that of the general case (note that for 1-sided queries, the lower and upper bounds for answering RMQ on a 1D array are asymptotically smaller than $\Theta(n)$ bits for any $\sigma = o(n)$).
From an upper bound perspective, when $\sigma = O(1)$, we can obtain an $(n\log r_{\sigma-1} + o(n))$-bit data structure that answers RMQ in $O(1)$ time, as summarized in the following theorem.

\begin{theorem}\label{thm:1D-upperbound}
	Given a 1D array $A$ of size $n$ over an alphabet of size $\sigma = O(1)$, there exists a data structure of size $n\log r_{\sigma-1} + o(n)$ bits that can answer RMQ in $O(1)$ time.
\end{theorem}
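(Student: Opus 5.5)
We encode the Cartesian tree $C(A)$, which by the Proposition has left height at most $\sigma-1$, using a Fischer--Heun style two-level scheme. The space is charged against the number of binary trees of bounded left height with $k$ nodes, which by the Genitrini et al.\ count behind Theorem~\ref{thm:1D-lowerbound} is $r_{\sigma-1}^{k}\cdot k^{O(\sigma)}$.

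\textbf{Blocking.} Split $A$ into blocks of size $b = \frac{1}{2}\log_{r_{\sigma-1}} n$, so there are $n/b$ blocks. Every block is itself an array over the same alphabet of size $\sigma$. Hence the Cartesian tree of a block lies in the class $\mathcal{T}_b$ of binary trees on $b$ nodes with left height at most $\sigma-1$, and $|\mathcal{T}_b| \le r_{\sigma-1}^{b} b^{O(\sigma)}$.

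\textbf{Per-block storage.} For each block we store the index of its Cartesian tree within $\mathcal{T}_b$. Each index takes $\lceil \log|\mathcal{T}_b|\rceil \le b\log r_{\sigma-1} + O(\sigma\log b)$ bits. The indices are concatenated at fixed width, so block $t$'s index can be read in $O(1)$ time. Since $\sigma=O(1)$, the total is $n\log r_{\sigma-1} + O((n/b)\log b) = n\log r_{\sigma-1} + O(n\log\log n/\log n)$ bits.

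\textbf{In-block queries.} There are at most $|\mathcal{T}_b| = O(\sqrt{n}\,\mathrm{polylog}\, n)$ tree classes. We build a lookup table indexed by (tree index, $i$, $j$) that returns the in-block RMQ answer; it occupies $O(\sqrt n\,\mathrm{polylog}\,n)=o(n)$ bits. This answers every query lying inside a single block in $O(1)$ time. The tree is a faithful stand-in for the block because in-block RMQ answers are determined exactly by the block's Cartesian tree, with ties resolved leftmost.

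\textbf{Inter-block queries.} Here lies the main obstacle: we need the position of each block minimum and the relative order of block minima, and cannot afford explicit values. Block minimum positions are free, since each is the root of the block's tree and is read from the lookup table. For comparisons across blocks, we store a $(2(n/b)+o(n/b))$-bit RMQ encoding (Fischer--Heun) on the array $M$ of block minima, using the same leftmost tie-breaking. This costs $O(n/\log n)$ bits.

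\textbf{Answering a query $\rmq(i,j)$.} Decompose $[i,j]$ into a suffix of the first block, a run of full blocks, and a prefix of the last block. We obtain three candidates: two from the in-block tables and one from the RMQ on $M$ combined with that block's root position. We must then compare these candidates. The values cannot be compared directly, but the comparisons are still answerable from the structure. For two candidates $p<q$, $A[p]\le A[q]$ iff the leftmost minimum of $[p,q]$ is $p$, which is a single RMQ on a range. To avoid recursion, the comparisons can be realized differently. One option is to extend the decomposition so that the range of full blocks absorbs the boundary blocks. The cleaner option is the standard trick: store the RMQ encoding on the array $M'$ of minima of \emph{all} blocks, and also store $2n/b$ bits of superblock-level information, then compare via the Cartesian tree of the full array.

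\textbf{Simplest concrete route for the comparisons.} Use the $2n+o(n)$ BP/DFUDS representation idea, but store $C(A)$ succinctly via the above block compression of its balanced-parentheses sequence. Concretely, write $C(A)$ in DFUDS, cut the parenthesis string into chunks of $\Theta(\log n)$, and encode each chunk by its index among chunks realizable by trees of bounded left height. Then use the Fischer--Heun $O(1)$ RMQ navigation (a $\pm1$ range-min over the excess) with $o(n)$-bit auxiliary tables. The main point to verify, and where I expect the difficulty, is that the number of distinct length-$\ell$ chunks arising from trees of left height at most $\sigma-1$ is $r_{\sigma-1}^{\ell/2+o(\ell)}$. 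If that holds, the chunked entropy-coded string costs $n\log r_{\sigma-1}+o(n)$ bits, and the rest is inherited from the standard constant-time machinery on compressed bitvectors.
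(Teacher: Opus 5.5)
\textbf{Route 1 (array blocks).} This route has a genuine gap, which you locate but do not close. Block-local Cartesian trees plus an RMQ encoding of the block minima do not determine the answers at all. Already with $b=2$ and $\sigma=3$, the arrays $[0,2,1,1]$ and $[0,1,2,2]$ have the same two block trees and the same tree on $M$ ($[0,1]$ versus $[0,2]$). Yet $\rmq(2,3)$ is $3$ for the first array and $2$ for the second. So no comparison rule layered on top of the stored data can work. The remedies you list (absorbing boundary blocks, a few superblock bits, ``compare via the Cartesian tree of the full array'') are not arguments. Restoring this cross-block information in $o(n)$ extra bits is exactly the difficulty of a non-systematic encoding.

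The paper avoids the issue by cutting the tree $C(A)$ rather than the array. It takes Davoodi et al.'s tree-covering representation, whose micro-trees of at most $(\log n)/4$ nodes are connected pieces of $C(A)$ and hence again have left height at most $\sigma-1$. It then only shrinks the lookup table to such micro-trees, and keeps unchanged the $o(n)$-bit macro structure that answers LCA (hence RMQ) across micro-trees.

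\textbf{Route 2 (chunk-coded DFUDS).} Here the whole space bound hinges on a chunk-counting claim that you leave unproved, and for DFUDS it is false for the natural tree. Take $\sigma=2$ and the height-$\le\sigma$ ordinal tree obtained from $C(A)$ by first child $=$ left child, next sibling $=$ right child. Equivalently, the parent of $p$ is the nearest position to its right with a strictly smaller value, under a virtual root.

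After the virtual root's code, the DFUDS of this tree is a concatenation of words $\texttt{(}^{t}\texttt{)}^{t+1}$, one for each $0$ of $A$ together with its $t$ leaf children. Every sequence $t_1,\dots,t_k$ occurs, realised by $1^{t_1}0\,1^{t_2}0\cdots 1^{t_k}0$. So the number of distinct length-$\ell$ chunks is at least the number of compositions of $\ell$ into odd parts, $F_\ell\approx\varphi^{\ell}=r_2^{\ell/2}$, rather than $r_1^{\ell/2}=2^{\ell/2}$. Your fixed-width chunk code then costs about $1.388n$ bits instead of $n$. For the other standard tree (parent $=$ nearest smaller-or-equal to the left, with degree at most $\sigma$ but unbounded height) the count may be fine, but that needs its own proof.

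The claim does hold for the balanced-parentheses sequence of the height-bounded tree. Ignoring the virtual root's two parentheses, every chunk is a $\pm1$ walk confined to $\sigma+1$ levels. Hence there are at most $(\sigma+1)(2\cos\frac{\pi}{\sigma+2})^{\ell}=(\sigma+1)\,r_{\sigma-1}^{\ell/2}$ chunks.

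Moreover, for $i<j$, $\rmq(i,j)$ is $j$ if $j$ is an ancestor of $i$, and otherwise it is the child of $\mathrm{lca}(i,j)$ on the path to $i$. Both can be computed in $O(1)$ time by $\pm1$ range minima on the excess. This needs only $O(1)$-time access to $\Theta(\log n)$-bit pieces, which a decoding table of $o(n)$ bits provides.

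With BP in place of DFUDS and these steps written out, your chunk coding would be a valid, more self-contained alternative to the paper's appeal to Davoodi et al. As written, however, the proposal does not prove the theorem.
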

\begin{proof}
	We use a slightly modified version of the data structure from Davoodi et al.~\cite{DBLP:conf/cocoon/DavoodiRS12}. Roughly speaking, their approach decomposes $C(A)$ into $\Theta(n/\ell)$ \textit{micro-trees} of size at most  $\ell = (\log n)/4$. Each micro-tree is stored as an index into a precomputed table of $O(2^{2\ell}) = o(n)$ bits, which contains the information of all possible Cartesian trees of size at most $\ell$. One can show that these pointers into the precomputed table use $2n+o(n)$ bits. Additionally, the data structure includes $o(n)$-bit auxiliary structures to support the queries in $O(1)$ time.
	
	We modify the data structure such that the precomputed table stores only Cartesian trees of size at most $\ell$ with left height at most $\sigma-1$ using $O(2^{\ell \log r_{\sigma-1}}) = o(n)$ bits. All other $o(n)$-bit auxiliary structures in \cite{DBLP:conf/cocoon/DavoodiRS12} remain unchanged.
    As a result, the total space required to store the indexes into the precomputed table for all micro-trees is reduced to at most $\frac{n}{\ell} \cdot (\ell \log r_{\sigma-1}) = n\log r_{\sigma-1}$ bits.
\end{proof}

\section{RMQ on 2D array over bounded-sized alphabets}\label{sec:2D}
In this section, we consider a data structure for answering RMQ on a 2D array $A[1, m][1, n]$ of size $N = mn$ over an alphabet $\Sigma = \{0, \dots, \sigma-1\}$ of size $\sigma$ with $m \le n$. 

\subsection{1-sided queries}
Recall that for 1-sided RMQ, the query range is restricted to $[1, m] \times [1, i]$ for $i \in [n]$.
We begin by considering the upper and lower bounds of the data structure for the general case.

\begin{theorem}\label{thm:1-sided_gen}
	For an $m \times n$ array, (1) at least $n \log m$ bits are necessary to answer 1-sided RMQ, and 
	(2) there exists an $(n \log m + O(n))$-bit data structure that answers 1-sided RMQ in $O(1)$ time. 
\end{theorem}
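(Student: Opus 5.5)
For the lower bound (1) I will build $m^n$ arrays whose 1-sided answers are pairwise different, using distinct values and no bounded-alphabet constraint. Fix an index $r_j \in [m]$ for every column $j \in [n]$. In column $j$, place $n-j+1$ at row $r_j$, and place values larger than $n$ in all other cells (for example $n+1$). The column minima then strictly decrease from left to right. As a result, $\rmq{}(1,m,1,j)$ is the position of the minimum of column $j$, namely $(r_j, j)$. The query answers therefore give back the whole sequence $(r_1,\dots,r_n)$, and distinct sequences produce distinct answer sets. This yields $m^n$ distinguishable arrays, so at least $\log m^n = n\log m$ bits are needed.

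For the upper bound (2) I follow the idea behind the upper bound of Theorem~\ref{thm:1d1side}. Let $B[1,n]$ be the bit string with $B[j]=1$ exactly when $\rmq{}(1,m,1,j)$ lies in column $j$. Suppose the prefix minimum is not in column $j$. Then the answer equals the answer for prefix $j-1$, because the leftmost-on-tie rule keeps an earlier position whenever column $j$ contains nothing strictly smaller. So $\rmq{}(1,m,1,j)=\rmq{}(1,m,1,j')$, where $j'$ is the rightmost position of a $1$ in $B[1,j]$. Such a $j'$ always exists since $B[1]=1$. When $B[j']=1$, the answer sits in column $j'$, and I store only its row index in an array $R$ indexed by the rank of that $1$. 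This takes $\ceil{\log m}$ bits per entry, and at most $n\ceil{\log m}\le n\log m + n$ bits overall. I store $B$ in plain form with $o(n)$-bit $\rank{}$/$\select{}$ support~\cite{DBLP:journals/talg/RamanRS07}. A query $j$ is then answered in $O(1)$ time: set $k=\rank_1(j)$ and $j'=\select_1(k)$, and return $(R[k], j')$. The total space is $n\log m + O(n)$ bits.

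The only subtle part is the tie-breaking. I must check that "top-leftmost" picks the earlier column before comparing rows, so that a tie between a new column and an old minimum keeps the old answer. If the rule were row-major instead, I would have to redefine $B[j]$ directly as the indicator that the answer changes at $j$; the encoding would stay the same. The claim that the answer for prefix $j$ either stays the same or moves into column $j$ holds under either convention, because the new candidate set is the old set together with column $j$.
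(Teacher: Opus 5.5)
Your proof is correct and follows the paper's approach. The lower bound is the same $m^n$-array counting argument. The upper bound uses the same marker bit string $B$ with $\rank{}$/$\select{}$ support plus $\ceil{\log m}$-bit row indices stored only at marked columns. Your criterion ``the answer lies in column $j$'' coincides with the paper's ``the answer changes at $j$'' under any fixed tie-breaking order, so the contemplated redefinition is unnecessary.

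Your decreasing assignment $n-j+1$ is in fact essential. The paper's literal construction places value $j$ in column $j$, and then every prefix minimum is the entry of column $1$. Your version repairs that slip.
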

\begin{proof}
	(1) Consider a set $\mathcal{B}$ containing all possible $m \times n$ 
	arrays where each array has exactly one entry with value $i$ in the $i$-th column, while all other entries are set to $n+1$. 
	The total number of such arrays is $|\mathcal{B}| = m^n$. Then, any array $B \in \mathcal{B}$ can be reconstructed using 1-sided RMQ on $B$, since the position of the value $i$ in column $i$ is given by $\rmq{}(1, m, 1, i)$. Thus, at least $n \log m$ bits are necessary to answer 1-sided RMQ.
	\\
	(2) For all indices $j$ such that $j=1$ or $\rmq{}(1, m , 1, j-1) \neq \rmq{}(1, m , 1, j)$, we store the row positions of $\rmq{}(1, m, 1, j)$ using at most $n \lceil\log m\rceil$ bits (in this case, $\rmq{}(1, m, 1, j)$ is in the $j$-th column). Additionally, we maintain a bit string $B$ of size $n$ to mark all such indices $j$, along with an auxiliary structure of $o(n)$ bits that supports $\rank{}$ and $\select{}$ queries on it in $O(1)$ time~\cite{DBLP:journals/talg/RamanRS07}. Thus, the total size of the data structure is at most $n \log m + O(n)$ bits. 
	To answer $\rmq{}(1, m, 1, i)$ in $O(1)$ time, we proceed as follows:
	(i) Identify the rightmost $j \le i$ such that $B[j] = 1$
	using $\rank{}$ and $\select{}$ queries on $B$, and 
	(ii) return $(r, j)$, where $r$ is the stored row position (of $\rmq{}(1, m, 1, j)$) corresponding to the $j$-th column.
\end{proof} 

When $A$ is defined over an alphabet of size $\sigma$, there exist at most $\min{}(\sigma, n)$ indices $j$ 
where $\rmq{}(1, m , 1, j-1) \neq \rmq{}(1, m , 1, j)$. 
Therefore, the data structure of the Theorem~\ref{thm:1-sided_gen} takes at most $\min{}(\sigma, n) \lceil\log m\rceil + \log{{n \choose \min{}(\sigma, n)}} +o(n)$ bits~\cite{DBLP:journals/talg/RamanRS07}, which is smaller than $n \log m + O(m)$ bits when $\sigma = o(n)$. Furthermore, the following theorem shows that this is optimal when $\sigma \le n$, up to additive lower-order terms (if $\sigma > n$, the data structure of Theorem~\ref{thm:1-sided_gen} gives a data structure with optimal space usage). 

\begin{theorem}\label{thm:1-sided_bounded}
	For an $m \times n$ array defined over an alphabet of size $\sigma \le n$, at least $(\sigma-1) \log m + \log {{n \choose \sigma-1}}$ bits are necessary to answer 1-sided RMQ.
\end{theorem}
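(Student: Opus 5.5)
We prove the bound by a counting argument that combines the constructions in the proof of Theorem~\ref{thm:1d1side} and in part (1) of Theorem~\ref{thm:1-sided_gen}. We build a family $\mathcal{A}$ of $m \times n$ arrays over $\Sigma=\{0,\dots,\sigma-1\}$ with $|\mathcal{A}| = m^{\sigma-1}{n \choose \sigma-1}$. We then show that any two distinct arrays in $\mathcal{A}$ give different answers to some 1-sided RMQ. The bound follows by taking logarithms.

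\textbf{Construction.} First choose $\sigma-1$ columns $p_{\sigma-2} < p_{\sigma-3} < \cdots < p_0$ from $[n]$, which is possible since $\sigma \le n$. Then, for each $t \in \{0,\dots,\sigma-2\}$, choose a row $r_t \in [m]$ and set $A[r_t][p_t] = t$. All remaining entries get the value $\sigma-1$. The choice of columns and the choice of rows are independent, so the number of arrays is ${n \choose \sigma-1} \cdot m^{\sigma-1}$.

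\textbf{Distinguishability.} Take $A_1 \neq A_2$ in $\mathcal{A}$ and let $l$ be the leftmost column in which they differ. The first $l-1$ columns agree, so the sets of chosen columns to the left of $l$ coincide, and so do the values placed there. Values decrease from left to right among the chosen columns. Hence the prefix $[1,m]\times[1,l-1]$ has the same minimum value $v$ in both arrays (or holds only $\sigma-1$ entries), and $\rmq{}(1,m,1,l-1)$ is the same position in both. We consider two cases.

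\emph{Case 1: column $l$ is chosen in exactly one of the arrays, say $A_1$.} In $A_1$ it holds a value $t$ that is strictly smaller than every special value to its left. This includes the case where nothing to the left is special, since then $t<\sigma-1$. So $\rmq{}(1,m,1,l)$ in $A_1$ lies in column $l$. In $A_2$, column $l$ holds only $\sigma-1$. Because of leftmost/top-leftmost tie-breaking, the answer in $A_2$ stays the earlier position, which lies in a column $<l$. The one subtlety is when nothing to the left is special: then $\rmq{}(1,m,1,l)$ in $A_2$ is $(1,1)$, which is not in column $l$ when $l>1$. When $l=1$ there is nothing to the left, so the chosen entry $t<\sigma-1$ is the answer in $A_1$, while the answer in $A_2$ is $(1,1)$. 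If $A_1$'s chosen row happens to be row $1$, then $A_1[1][1]=t\neq\sigma-1=A_2[1][1]$ still distinguishes the arrays, but only as values, so we must check the queries instead. Here the same column is special in one array and not in the other, so the value $t$ has to appear somewhere else in $A_2$. I expect this case to need a small extra argument. The simplest fix is to shift the alphabet usage so that a non-chosen column can never produce the answer: force $A_2$'s answer onto a later column and compare the queries there.

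\emph{Case 2: column $l$ is chosen in both arrays.} It must then carry the same value $t$ in both, because the number of chosen columns to the left is the same. So the two arrays differ in the row, $r_t \neq r'_t$. The query $\rmq{}(1,m,1,l)$ returns $(r_t,l)$ and $(r'_t,l)$ respectively, which are different.

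\textbf{Main obstacle.} The delicate step is the tie-breaking corner case in Case 1, where an all-$(\sigma-1)$ prefix makes the answer default to the top-left cell. I would handle it either by the query comparison at a later column, as above, or, more cleanly, by requiring the chosen cells never to lie in row $1$ of column $1$. The latter loses only a lower-order amount, which is acceptable if the bound is read up to additive lower-order terms; otherwise the later-query argument has to be carried out exactly.
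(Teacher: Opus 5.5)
\textbf{Gap: the corner case $l=1$ is left open.} Your family $\mathcal{A}$ and the count $m^{\sigma-1}\binom{n}{\sigma-1}$ are exactly the paper's, and your Case~2 and Case~1 with $l>1$ are correct. The gap is the subcase you flag yourself. Take $l=1$, with column $1$ chosen in $A_1$ (necessarily holding $\sigma-2$ at row $r_{\sigma-2}=1$) and not chosen in $A_2$. Then $\rmq{}(1,m,1,1)=(1,1)$ in both arrays, so the query at the leftmost differing column does not separate them. You leave this unresolved, and neither remedy you offer closes it:
\begin{itemize}
\item The ``later query'' is never specified. For $\sigma\ge 3$, the natural candidate (the leftmost chosen column $q$ of $A_2$) can fail, since $A_1$ may have its next chosen column at $q$ in the same row.
\item Excluding the cell $(1,1)$ from the family removes $m^{\sigma-2}\binom{n-1}{\sigma-2}$ arrays. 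That proves a bound strictly smaller than the stated $(\sigma-1)\log m+\log\binom{n}{\sigma-1}$, and the theorem claims this exact quantity, not a bound up to lower-order terms.
\end{itemize}

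\textbf{The fix.} Your Case~1 argument never uses that $l$ is the \emph{leftmost} differing column. Take any column $c\ge 2$ that is chosen in $A_1$ but not in $A_2$.
\begin{itemize}
\item In $A_1$, $\rmq{}(1,m,1,c)$ lies in column $c$, because a chosen column's value is strictly smaller than every entry to its left.
\item In $A_2$, the answer lies in a column $<c$: either an earlier chosen cell, or the default $(1,1)$.
\end{itemize}
Both arrays have exactly $\sigma-1$ chosen columns. So if their column sets differ, the symmetric difference has at least two elements and therefore contains some $c\ge 2$. If the column sets agree, the arrays differ in some row $r_t\neq r'_t$ and your Case~2 applies. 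This yields the exact bound.

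\textbf{How the paper handles it.} The paper resolves the same issue implicitly through its reconstruction. It reads off the array from the $\sigma-1$ \emph{rightmost} columns $j'$ for which $\rmq{}(1,m,1,j')$ lies in column $j'$. This automatically discards the spurious column $1$, which has this property whenever it is not chosen.
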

\begin{proof}
	We use a similar argument as in the proof of (1) in Theorem~\ref{thm:1-sided_gen}.
	Let $n \ge j_0 > j_1 > \dots > j_{\sigma-2} \ge 1$ be $\sigma-1$ distinct column positions. 
	Define $\mathcal{B}$ as the set of all possible $m \times n$ arrays where:
	(i) For each $k \in \{0, \dots, \sigma-2\}$, exactly one entry in the $j_k$-th column has the value $k$. (ii) All other entries are set to $\sigma-1$.
	Since $j_0, \dots, j_{\sigma-1}$ can be chosen arbitrarily within the given range, 
	the total size of $\mathcal{B}$ is $m^{\sigma-1}{n \choose \sigma-1}$.
	Furthermore, any array $B \in \mathcal{B}$ can be reconstructed using 1-sided RMQ on $B$. 
	Specifically, for each $k \in \{0, \dots, \sigma-2\}$, the position of the value $k$ is given by $\rmq{}(1, m, 1, j'_k)$, where $j'_k$ is the $k$-th rightmost column such that the column position of $\rmq{}(1, m , 1, j'_k)$ is $j'_k$.
\end{proof} 

\subsection{2-sided queries}\label{sec:2d2sided}
In this section, we consider the case where the query range is restricted to $[1, i] \times [1, j]$ with $i \in [m]$ and $j \in [n]$. As with 1-sided queries, we first give the upper and lower bounds of the data structure for the general case.

\begin{theorem}\label{thm:2-sided_gen}
	For an $m \times n$ array, (1) at least $\Omega(mn)$ bits are necessary to answer 2-sided RMQ, and 
	(2) there exists an $O(mn)$-bit data structure that answers 2-sided RMQ in $O(1)$ time. 
\end{theorem}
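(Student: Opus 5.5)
For the lower bound (1), I will build a family of $2^{\Omega(mn)}$ arrays whose 2-sided RMQ answers are pairwise distinct, so that the full array (or a fixed fraction of its bits) can be read off from the query answers. The plan is to make values decrease strictly along a staircase, so every cell on a suitable antidiagonal structure becomes the answer to some prefix query exactly when a free bit is set.

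The concrete construction I would try assigns to each position $(i,j)$ a base value $B(i,j) = 2(m+n-i-j)$. This base is strictly decreasing as either coordinate grows, so under base values alone $\rmq{}(1,i,1,j)=(i,j)$. Then I would perturb each cell by a bit $b_{i,j}\in\{0,1\}$, taking $A[i,j]=B(i,j)+b_{i,j}\cdot c$ for some offset $c$. I would check whether the bits can be decoded from queries. The natural test is: with $b_{i,j}=0$ the cell $(i,j)$ wins $\rmq{}(1,i,1,j)$, and with $b_{i,j}=1$ a neighbor $(i-1,j)$ or $(i,j-1)$ wins instead.

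The main obstacle is that neighbors are also perturbed, so the decoding must be ordered carefully. To handle this, I would restrict the free bits to a sub-pattern, say cells with $i+j$ even (about $mn/2$ of them), and fix the cells with $i+j$ odd to have no perturbation. Every free cell $(i,j)$ has its competitors in the prefix rectangle: the fixed neighbors have value $B(i,j)+2$, and the other free cells have value at least $B(i,j)+4$. I would set the perturbation so a free cell takes value $B(i,j)$ if $b=0$ and $B(i,j)+3$ if $b=1$. Then $\rmq{}(1,i,1,j)=(i,j)$ iff $b_{i,j}=0$, which gives $\Omega(mn)$ bits. I will double-check the boundary cells (row 1, column 1), where only one neighbor exists. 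This works since $i+j\ge 2$ ensures that a fixed neighbor exists when $i+j>2$, and I can drop $(1,1)$.

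For the upper bound (2), I would store, for every cell $(i,j)$, enough to answer $\rmq{}(1,i,1,j)$ in constant time using $O(1)$ bits per cell. The key observation is $\rmq{}(1,i,1,j)\in\{(i,j),\rmq{}(1,i-1,1,j),\rmq{}(1,i,1,j-1)\}$, since the rectangle $[1,i]\times[1,j]$ is the union of $(i,j)$ with the two smaller prefix rectangles. Therefore I store a 2-bit label $L[i,j]$ indicating which of the three candidates is the answer. Querying by following these pointers could take $O(m+n)$ steps, so constant time needs more. I would instead store for each cell the answer directly but compactly. The answer to $(i,j)$ is some cell $(i',j')$ with $L[i',j']$ indicating "self", and the answer of row-prefix $(i,j)$ equals the answer of $(i,j-1)$ or a self cell in column $j$. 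To get $O(1)$ time, for each column $j$ I would store a bit vector over rows marking where the answer changes within column $j$, plus, for each row, a bit vector over columns marking changes. These are $O(mn)$ bits in total with rank/select structures~\cite{DBLP:journals/talg/RamanRS07}. Combining them with one precomputed pointer (the position of the rightmost "self" cell) resolves the chain in $O(1)$ steps. This pointer-chasing compression is the step I expect to be delicate; a fallback is to note that for fixed $i$, the answer column is monotone in $j$, and to encode it via a unary/rank representation per row.
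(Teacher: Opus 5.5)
Your lower bound (1) is correct, and it takes a different route from the paper. The paper stacks rows in disjoint, decreasing value bands, so $\rmq{}(1,i,1,j)$ always lies in row $i$. Each row is a non-increasing sequence with steps $0$ or $1$, which prefix queries decode through the leftmost tie-breaking rule; this gives $m(n-1)$ bits. Your checkerboard perturbation of the base $2(m+n-i-j)$ gives only about $mn/2$ bits. In exchange, each free bit is read off by one query independently of all other bits, and the argument never uses tie-breaking. This works because, for $(i,j)\neq(1,1)$ with $i+j$ even, some fixed neighbour has value $B(i,j)+2$, and every other cell of the prefix rectangle has value at least $B(i,j)+2$.

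The upper bound (2) has a gap exactly at the step you flagged: you never give the $O(1)$-time query procedure. The ``one precomputed pointer to the rightmost self cell'' is not specified: a single pointer cannot serve all queries, and one pointer per cell would cost $\Theta(mn\log(mn))$ bits. Your fallback recovers only the column of the answer, not its row. The missing observation is that there is no chain to resolve. If $\rmq{}(1,i,1,j)\neq\rmq{}(1,i,1,j-1)$, the new answer must lie in column $j$. So if row $i$'s vector marks $j=1$ and every such change, the column of $\rmq{}(1,i,1,j)$ is the last marked position $j^*\le j$. Symmetrically, if column $j$'s vector marks $i=1$ and every $i$ with $\rmq{}(1,i,1,j)\neq\rmq{}(1,i-1,1,j)$, the row of the answer is the last marked position $i^*\le i$. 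Hence the answer is $(i^*,j^*)$, obtained with two rank/select operations. Concatenating the vectors into two bitvectors of length $mn$ gives $2mn+o(mn)$ bits.

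With this made explicit, your structure is valid, needs only plain bitvectors, and works for any fixed total order used to break ties. The paper instead stores Fischer--Heun 1D RMQ structures in two families. One is on the column-prefix-minimum arrays $A_i[j]=\min_{k\le i}A[k][j]$, where the answer column is $\rmq{}(1,j)$ on $A_i$. The other is on every column, where the row is $\rmq{}(1,i)$ on that column.
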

\begin{proof}
	(1) Let $\mathcal{B}_i$ be the set of all possible 1D arrays of size $n$ such that for each $B \in \mathcal{B}_i$:
	(i) $B[1] = n\cdot i-1$, and
	(ii) for $j \in [n]\setminus \{1\}$, $B[j]$ is either $B[j-1]$ or $B[j-1]-1$. 
	The size of $\mathcal{B}_i$ is $2^{n-1}$. Furthermore, any array in $\mathcal{B}_i$ can be reconstructed using $\rmq{}(1, j)$ for all $j \in [n]$, based on the tie-breaking rule for RMQ.
	
	Next, let $\mathcal{B}$ be the set of all possible $m \times n$ arrays where the $i$-th row is an array from $\mathcal{B}_{m-i}$. The size of $\mathcal{B}$ is $2^{m(n-1)}$. Moreover, any array $B \in \mathcal{B}$ can be reconstructed using 2-sided RMQ on $B$, since the $i$-th row of $B$ can be determined from $\rmq{}(1, i, 1, j)$ for all $j \in [n]$ (note that $\rmq{}(1, i, 1, j)$ always lies in the $i$-th row by construction - since any element in the first $(i-1)$ rows is larger than every element in the $i$-th row). Thus, at least $\log |\mathcal{B}| = \Omega(mn)$ bits are necessary to answer 2-sided RMQ.
	\\\\
	(2) Given an input array $A$, define $m$ distinct 1D arrays $A_1, \dots, A_m$ of size $n$ where for each $i \in [m]$ and $j \in [n]$, $A_i[j] = \min_{k \in [i]} A[k][j]$. 
	We then construct a data structure to answer RMQ on these arrays in $O(1)$ time on each of these $m$ arrays (we do not store the arrays), using a total of $O(mn)$ bits~\cite{DBLP:journals/siamcomp/FischerH11}. Additionally, we maintain $n$ data structures for answering RMQ on each column of $A$ in $O(1)$ time, using another $O(mn)$ bits.
	To answer the 2-sided RMQ query $\rmq{}(1, i, 1, j)=(i', j')$, we proceed as follows:
	(i) Compute $j'$ in $O(1)$ time by $\rmq{}(1, j)$ on $A_i$, and (ii) compute $i'$ in $O(1)$
	time by $\rmq{}(1, i)$ on the $j'$-th column of $A$.  
\end{proof} 

Next, consider the case where the input array $A$ is defined over an alphabet $\Sigma = \{0, \dots, \sigma-1\}$ of size $\sigma$. For each position $(i, j)$, we refer to the range $[1, i] \times [1, j]$ as the 2-sided region defined by $(i, j)$.
Let $P = \{(i', j') \mid i \in [m], j \in [n], \rmq{}(1, i, 1, j) = (i', j')\}$, i.e., a set of all positions in $A$ that serve as answers to some 2-sided RMQ on $A$. 
For any two positions $(i, j)$ and $(i', j')$ of $A$, we say that $(i', j')$ \textit{dominates} $(i, j)$ if and only if $i \le i'$ and $j \le j'$. 
Because of the tie-breaking rule, it follows that for any two distinct positions $(i, j)$ and $(i', j')$ in $P$ if $(i', j')$ dominates $(i, j)$, then $A[i,j] > A[i',j']$. For example in a 2D array in Figure~\ref{figure:2sided}(a), two positions $(1, 2)$ and $(2, 3)$ are in $P$, and since the position $(2, 3)$ dominates the position $(1, 2)$, $A[1, 2] = 2$ is greater than $A[2, 3] = 0$.
We partition $P$ into set of \textit{staircases} $S_0, S_2, \dots, S_{\sigma-1}$ such that for any $\ell$,
$S_{\ell}$ is the set of all positions $(i,j)$ with $A[i,j] = \ell$. 
Then from the definition, we can directly observe the following:
\begin{proposition}\label{prop:stair}
	Consider staircases $S_1, S_2, \dots,$ within an $m \times n$ array defined over an alphabet of size $\sigma$. The following holds for any $\ell$:
	
	\begin{enumerate}[(a)]
		\item No position in $S_{\ell}$ is dominated by any other position within the same staircase.
		\item The $i$-th bottom-most position in $S_{\ell}$ coincides with the $i$-th leftmost position in $S_{\ell}$.
		\item The union of 2-sided regions defined by the positions in $S_{\ell}$ contains no value smaller than $\ell$. 
	\end{enumerate}
\end{proposition}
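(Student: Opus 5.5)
The plan is to derive all three parts from one fact: the strict-decrease property of $P$ under domination, which is stated just before the proposition and follows from the leftmost/top-leftmost tie-breaking rule. I will first recheck that fact. Let $(i,j)$ and $(i',j')$ be distinct points of $P$ with $(i',j')$ dominating $(i,j)$. Since $(i',j') \in P$, it is $\rmq{}(1,i'',1,j'')$ for some $(i'',j'')$ with $i'' \ge i'$ and $j'' \ge j'$. That query's region contains $(i,j)$, because $i \le i' \le i''$ and $j \le j' \le j''$. So $A[i',j'] \le A[i,j]$. If the two values were equal, the tie-breaking rule would prefer $(i,j)$, which comes before $(i',j')$ in the top-leftmost order. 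Hence $A[i',j'] < A[i,j]$.

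\textbf{Part (a).} All positions in $S_\ell$ carry the same value $\ell$. By the fact above, no position of $S_\ell$ can dominate another distinct position of $S_\ell$.

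\textbf{Part (b).} Distinct positions of $S_\ell$ have distinct rows and distinct columns. Indeed, if two of them shared a row (or a column), one would dominate the other, contradicting (a). So $S_\ell$ is an antichain under the product order. Sort $S_\ell$ by increasing column. The rows must then be strictly decreasing, since an increase would give a dominating pair. Therefore the leftmost point is the bottom-most, the second leftmost is the second bottom-most, and so on. This gives the staircase shape and proves (b).

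\textbf{Part (c).} Take any cell $(a,b)$ in the 2-sided region of some $(i',j') \in S_\ell$, so $a \le i'$ and $b \le j'$. Since $(i',j') \in P$, it is $\rmq{}(1,i'',1,j'')$ for some $i'' \ge i'$ and $j'' \ge j'$. That region contains $(a,b)$, so by the minimality of the RMQ answer, $A[a,b] \ge A[i',j'] = \ell$. Taking the union over all points of $S_\ell$ gives (c).

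The only step needing care is the tie-breaking argument used in the domination fact and hence in (a). The key observation there is that a dominated cell always precedes the dominating cell in top-leftmost order. This holds whether it lies in an earlier row, or in the same row and an earlier column.
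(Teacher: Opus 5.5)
Your proof is correct and takes the same route as the paper. The paper gives no separate argument: it presents all three properties as a direct observation from the fact, stated just before the proposition, that a dominating position in $P$ holds a strictly smaller value than the position it dominates (by the tie-breaking rule); you supply the details it leaves implicit, including a justification of that fact, and your check that a dominated cell precedes the dominating one in tie-breaking order holds under either a row-major or a column-major reading of ``top-leftmost.''
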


\begin{figure}[htp]
	\begin{center}
		\includegraphics[scale=0.28]{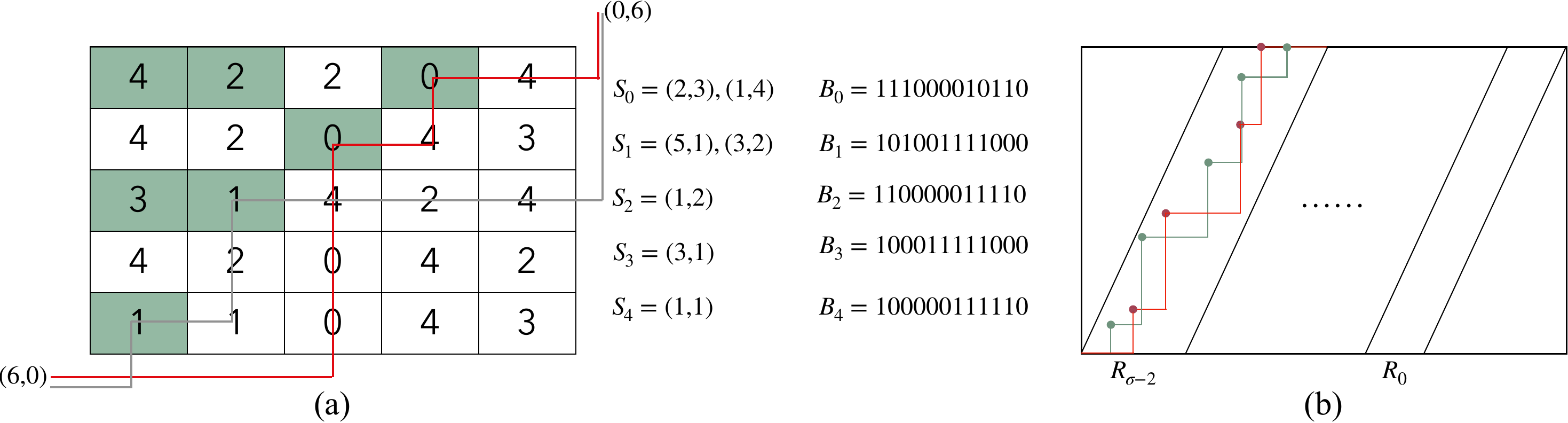}
	\end{center}
	\caption{(a) An $m \times n$ array over an alphabet of size 5, where the green positions represent the set $P$, partitioned into staircases $S_0, \dots, S_4$. The red and gray lines represent the lattice path $L_0$ and $L_1$, respectively. (b) Two distinct lattice paths in the region $R_{\sigma-2}$ and their corresponding staircases (denoted by circular points).}
	\label{figure:2sided}
\end{figure}

For each $S_{\ell}$, we define a lattice path $L_{\ell}$ from $(m+1, 0)$ to $(0, n+1)$ that satisfies the following conditions: (i) $L_{\ell}$ moves only up (i.e., from $(i, j)$ to $(i-1, j)$) or right (i.e., from $(i, j)$ to $(i, j+1)$) and passes through all positions in $S_{\ell}$ as its turning points in a clockwise order, and (ii) for any position $(i, j)$ in the array, there exists a position in $S_{\ell}$ dominated by $(i, j)$ if and only if $(i, j)$ lies on or below $L_{\ell}$. Moreover, $L_{\ell}$ can be efficiently encoded from the following lemma:

\begin{lemma}[\cite{DBLP:journals/tcs/BrodalDLRS16}]\label{lem:2-sided}
	$L_{\ell}$ can be encoded as a bitstring $B_{\ell}$ of size $m+n+2$ containing $m+1$ ones, where $B_{\ell}[p]$ is $0$ if and only if the $p$-th move of $L_{\ell}$ is north.
	Using $O(1)$ select queries on this bitstring, one can tell whether a position  $(i, j)$ lies on, below, or above $L_{\ell}$. 
\end{lemma}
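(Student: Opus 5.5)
The path $L_{\ell}$ goes from $(m+1,0)$ to $(0,n+1)$ using only north moves (row index decreases by one) and east moves (column index increases by one). So it consists of exactly $m+1$ north moves and $n+1$ east moves, $m+n+2$ moves in total. I encode it by writing one bit per move in order: $0$ for a north move, $1$ for an east move. This gives a bitstring $B_{\ell}$ of length $m+n+2$ with $m+1$ ones, matching the count in the statement. The map is injective because the path is recovered by replaying the moves from the fixed start point. It is also well defined, since condition (i) on $L_{\ell}$ makes every step a unit north or east step.

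To test a position $(i,j)$, I compute where the path sits in column $j$, or equivalently in row $i$. Before the path can move along row $i$, it must have made exactly $m+1-i$ north moves. So the first lattice point of $L_{\ell}$ on row $i$ is reached just after the $(m+1-i)$-th zero of $B_{\ell}$. The number of east moves made up to that point is
\[
c_1(i) = \select_0(m+1-i) - (m+1-i).
\]
The path leaves row $i$ at the $(m+2-i)$-th north move, which happens in column
\[
c_2(i) = \select_0(m+2-i) - (m+2-i).
\]
The path therefore covers exactly the columns $c_1(i)\le j \le c_2(i)$ on row $i$. Boundary cases such as $i=m+1$ or $\select_0$ past the last zero are handled by the conventions $\select_0(0)=0$ and the fixed endpoints.

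The answer then follows from comparing $j$ with $c_2(i)$:
\begin{itemize}
\item if $c_1(i)\le j\le c_2(i)$, then $(i,j)$ lies on $L_{\ell}$;
\item if $j>c_2(i)$, the path has already turned north before reaching column $j$ in row $i$, so $(i,j)$ lies to the right of the path, which is below it in the paper's orientation (the side containing dominating positions);
\item if $j<c_1(i)$, then $(i,j)$ lies above the path.
\end{itemize}
This uses two $\select_0$ queries and $O(1)$ arithmetic.

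The only real obstacle is fixing the orientation conventions: "below" must agree with condition (ii), that $(i,j)$ dominates some position of $S_{\ell}$. For this I check that positions with larger column index, or larger row index (further down), than the path in their row are exactly those dominating a turning point. Since the turning points of $S_{\ell}$ are the corners where an east move is followed by a north move, a position at or right of $c_2(i)$ in row $i$ dominates the corner at row $i' \le i$, column $c_2(i)$. Any such turning point lies on or below $L_{\ell}$, and the staircase monotonicity of Proposition~\ref{prop:stair}(a),(b) gives the converse.
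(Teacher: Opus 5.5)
Your argument is correct and is the standard one behind this lemma. The paper gives no proof of its own, only the reference to Brodal et al.; reading off the path's extent $[c_1(i),c_2(i)]$ in row $i$ with two $\select_0$ calls is the intended $O(1)$-select test, equivalent to locating the path in column $j$ with $\select_1$.

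There are two small slips, neither affecting the argument. First, with $0$ for north your string has $m+1$ zeros and $n+1$ ones, so it does not match the statement's ``$m+1$ ones''; that is a typo in the statement, and it is harmless because $\binom{m+n+2}{m+1}=\binom{m+n+2}{n+1}$. Second, condition~(ii) forces the positions of $S_\ell$ to be the corners where a north move is followed by an east move (the path enters them from below and leaves to the right), not the reverse; your final paragraph is not needed anyway, since condition~(ii) is part of the definition of $L_\ell$.
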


We can maintain $\sigma$ bitstrings $B_0, \dots, B_{\ell}$ 
of Lemma~\ref{lem:2-sided} using $\sigma \log {{n+m+2 \choose m+1}} + o(\sigma n)$ bits~\cite{DBLP:journals/talg/RamanRS07} that support rank and select queries in $O(1)$ time.
Given any position $(i, j)$ and $\ell \in \Sigma$, we can determine whether the value at $\rmq{}(1, i, 1, j)$ is at most $\ell$ in $O(1)$ time by Lemma~\ref{lem:2-sided} and Proposition~\ref{prop:stair}. 
Consequently, the 2-sided RMQ $\rmq{}(1, i, 1, j)$ can be answered in $O(\log \sigma)$ time 
performing a binary search to find the smallest $\ell$ where $(i, j)$ lies on or below $L_{\ell}$
After finding such $\ell$, we report $(m-i'+1, j')$ as the answer, 
where $i'$ and $j'$ denote the number of zeros and ones in $B_{\ell}$ up to the $j$-th $1$, respectively  
(i.e., $(m-i'+1, j')$ is the position in $S_{\ell}$ dominated by $(i, j)$). 
Both $i'$ and $j'$ can be found in $O(1)$ time using rank and select queries on $B_{\ell}$.
We summarize the result in the following theorem.

\begin{theorem}\label{thm:2-sided_ub}
	For an $m \times n$ array $A$ defined over an alphabet of size $\sigma$, there exist a data structure using $\sigma \log {{n+m+2 \choose m+1}} + o(\sigma n)$ bits that answers 2-sided RMQ in $O(\log \sigma)$ time.
\end{theorem}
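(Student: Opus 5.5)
The plan is to assemble the construction described just before the statement into a correctness argument, and then account for the space and the query time. The data structure stores, for each $\ell \in \Sigma$, the bitstring $B_\ell$ of Lemma~\ref{lem:2-sided}. Each $B_\ell$ encodes the lattice path $L_\ell$ of staircase $S_\ell$ and has length $m+n+2$ with exactly $m+1$ ones.

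\textbf{Space.} Each $B_\ell$ is kept in the compressed rank/select representation of~\cite{DBLP:journals/talg/RamanRS07}. This costs $\log{n+m+2 \choose m+1} + o(m+n)$ bits per string and answers rank and select in $O(1)$ time. Since $m \le n$, the redundancy per string is $o(n)$, so over $\sigma$ strings the total is $\sigma \log{n+m+2 \choose m+1} + o(\sigma n)$, as claimed. If $S_\ell$ is empty, $L_\ell$ is simply the extreme path and costs the same.

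\textbf{Correctness.} Fix a query point $(i,j)$ and let $v$ be the value at $\rmq(1,i,1,j)$.

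First I will show a monotonicity claim: for every $\ell$, the value $v$ is at most $\ell$ if and only if some position of $S_0 \cup \dots \cup S_\ell$ is dominated by $(i,j)$.
\begin{itemize}
\item Forward direction: the answer $\rmq(1,i,1,j)$ lies in $P$, so it belongs to $S_v$, and it is dominated by $(i,j)$.
\item Reverse direction: take $p \in S_{\ell'}$ with $\ell' \le \ell$ dominated by $(i,j)$. The 2-sided region of $p$ is contained in $[1,i]\times[1,j]$, and $A[p]=\ell'$, so $v \le \ell' \le \ell$.
\end{itemize}

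Next I will show that $v$ equals the smallest $\ell$ such that $(i,j)$ lies on or below $L_\ell$. By property (ii) of $L_\ell$, the point $(i,j)$ lies on or below $L_\ell$ exactly when $S_\ell$ contains a position dominated by $(i,j)$. If $S_\ell$ contains such a position, the monotonicity claim gives $v \le \ell$. Conversely, $S_v$ contains the answer itself. By Proposition~\ref{prop:stair}(c), no staircase $S_{\ell}$ with $\ell < v$ can contain a position dominated by $(i,j)$, since that would put a value below $v$ inside the query range. So the smallest such $\ell$ is exactly $v$.

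\textbf{Uniqueness of the answer inside $S_v$.} I also need the dominated position in $S_v$ to be unique, or at least identifiable as the true answer. By Proposition~\ref{prop:stair}(a)(b), the positions of $S_v$ form an antichain ordered in the same way by rows and by columns. If two positions $p, p' \in S_v$ were both dominated by $(i,j)$, neither would dominate the other. Both would then be inside the query range with value $v$. Only one of them can be the tie-broken answer, and the other would have to be the answer of its own 2-sided region.

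This is the step I expect to be the main obstacle. I will resolve it by a careful argument from the tie-breaking rule: the answer to $\rmq(1,i,1,j)$ is the top-leftmost minimum, so the non-answer among $p,p'$ cannot be the answer of any region containing the other. Concretely, I will show that the point of $S_v$ reached from $(i,j)$ through rank/select on $B_v$ (the turning point of $L_v$ at column $j$) is precisely the top-leftmost minimum of the range.

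\textbf{Query time.} Whether $(i,j)$ lies on or below $L_\ell$ is decided in $O(1)$ time by Lemma~\ref{lem:2-sided}. By the monotonicity claim this predicate is monotone in $\ell$, so a binary search over $\Sigma$ finds $v$ with $O(\log\sigma)$ tests. One final rank/select step on $B_v$ then recovers the row and column of the answer, giving $O(\log\sigma)$ time overall.
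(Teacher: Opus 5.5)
\textbf{Gap: the binary-search step.} Your data structure, the space accounting, and the fact that $v$ is the smallest $\ell$ with $(i,j)$ on or below $L_\ell$ are all correct and follow the paper. The gap is the sentence ``by the monotonicity claim this predicate is monotone in $\ell$''.

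Your monotonicity claim concerns the cumulative set $S_0\cup\dots\cup S_\ell$. The test that Lemma~\ref{lem:2-sided} gives on $B_\ell$ only tells you whether $S_\ell$ \emph{alone} contains a position dominated by $(i,j)$. That predicate is false for $\ell<v$ and true at $\ell=v$, but for $\ell>v$ it can go either way.

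Take the $2\times 2$ array with first row $2,1$ and second row $0,2$ ($\sigma=3$). Then $S_0=\{(2,1)\}$, $S_1=\{(1,2)\}$, $S_2=\{(1,1)\}$. For the query $(i,j)=(2,1)$ the predicate is true, false, true for $\ell=0,1,2$. A binary search that probes $\ell=1$ moves right and reports $(1,1)$ instead of $(2,1)$. It also fails whenever some $S_\ell$ with $\ell>v$ is empty, e.g.\ when $A[1][1]=0$.

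The paper's text makes the same leap when it says one can test ``the value at $\rmq(1,i,1,j)$ is at most $\ell$'' in $O(1)$ time. With per-staircase paths that test needs $\ell+1$ path checks, not one.

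\textbf{Repair at no extra space.} For each $\ell$, store the path $L'_\ell$ of the cumulative set $T_\ell=S_0\cup\dots\cup S_\ell$. Its turning points are the elements of $T_\ell$ that dominate no other element of $T_\ell$. Then $(i,j)$ lies on or below $L'_\ell$ iff it dominates a point of $T_\ell$, which by your own claim holds iff $v\le\ell$. So the predicate is genuinely monotone and binary search is valid. Each $L'_\ell$ is still a north/east path from $(m+1,0)$ to $(0,n+1)$, so the bitstrings have the same length and weight, and the bound $\sigma\log\binom{n+m+2}{m+1}+o(\sigma n)$ is unchanged.

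\textbf{Reporting the answer.} This also settles the uniqueness question you left as a promise; the dominated point really can be non-unique.
\begin{itemize}
\item No point of $T_{v-1}$ is dominated by $(i,j)$, so every turning point of $L'_v$ dominated by $(i,j)$ lies in $S_v$.
\item The true answer $a$ is itself a turning point of $L'_v$. Any other point of $T_v$ dominated by $a$ would lie in the query range, so it would either have value $<v$ or contradict Proposition~\ref{prop:stair}(a).
\item Let $t$ be the turning point of $L'_v$ with the largest column $\le j$. Then $t$ has row at most $\mathrm{row}(a)\le i$, so $t$ is a range minimum lying in $S_v$.
\item Under top-first tie-breaking, $a$ has the smallest row among range minima. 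Hence $t$ and $a$ share a row, and since distinct points of the antichain $S_v$ have distinct rows, $t=a$. If ties are broken column-first, use the symmetric rank/select step at row $i$ instead.
\end{itemize}
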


Note that for any $\sigma = o(m/\log m)$, the data structure of Theorem~\ref{thm:2-sided_ub} uses asymptotically less space than the data structure of Theorem~\ref{thm:2-sided_gen}. 
Finally, we consider the space lower bound for answering 2-sided RMQ on a 2D array with a bounded alphabet. The following theorem implies that when $\sigma \le n/c$ for any constant $c > 4$, the data structure in Theorem~\ref{thm:2-sided_ub} achieves asymptotically optimal space usage.

\begin{theorem}\label{thm:2-sided_lb}
	For an $m \times n$ array defined over an alphabet of size $\sigma < n/4-4$, at least
	$\sigma \log {{n-4\sigma \choose m}}$ bits are necessary to answer 2-sided RMQ.
\end{theorem}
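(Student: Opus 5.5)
We will prove the bound by a counting argument. We construct a family $\mathcal{B}$ of $m \times n$ arrays over $\Sigma$ whose members are pairwise distinguishable by 2-sided RMQ, with $|\mathcal{B}| \ge {n-4\sigma \choose m}^{\sigma}$. The natural carrier of information is the staircase structure from Proposition~\ref{prop:stair}. Each value $\ell \in \Sigma$ gets its own staircase $S_\ell$, and these staircases are laid out so that the lattice paths $L_0, \dots, L_{\sigma-1}$ (Lemma~\ref{lem:2-sided}) are nested and do not interfere. The path $L_{\sigma-1}$ is the outermost, lying near the top-left corner, and $L_0$ is the innermost, lying near the bottom-right. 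Each path is then chosen independently from a large family of monotone paths.

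Concretely, we shift the $\sigma$ paths diagonally. Since $m \le n$ and the paths must nest, we shift them horizontally. We reserve a region $R_\ell$ for staircase $S_\ell$, obtained by offsetting the array by roughly $4\ell$ columns, so that the regions for consecutive $\ell$ are disjoint shifted copies. Inside $R_\ell$ we choose any monotone staircase consisting of exactly one turning point in each row $i \in [m]$, with the column positions strictly increasing as the row index decreases. This is an $m$-subset of about $n-4\sigma$ available columns, giving ${n-4\sigma \choose m}$ choices per value. (This requires $n - 4\sigma \ge m$; otherwise the bound is trivial.) We put value $\ell$ at the chosen positions of $S_\ell$ and value $\sigma-1$ (the largest) at every other cell. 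The values $\sigma-1$ must be handled carefully, for example by using a slightly smaller family, which costs only lower-order terms or is absorbed into the constant $4$.

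The main obstacle is the non-interference requirement. For each $\ell$ and each $(i,j)\in S_\ell$, the 2-sided region $[1,i]\times[1,j]$ must contain no value smaller than $\ell$, so every staircase with a smaller value must lie strictly outside the union of the 2-sided regions of $S_\ell$. Moreover, each $(i,j)\in S_\ell$ must actually be returned by some query. We will enforce both conditions by padding. A few fixed guard cells (about $4$ columns per level, which is where the $4\sigma$ comes from) keep the shifted copies of the paths strictly separated even in the worst case. Under these choices we will verify that $S_\ell$ is exactly the set of cells of value $\ell$ that are answers to queries, using a case check on where the region $[1,i]\times[1,j]$ meets the different layers. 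This step needs the most care.

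Given the construction, recovery is immediate. For each query point $(i,j)$ we compute $\rmq{}(1,i,1,j)$, which lets us read off $P$ and its partition into staircases, and hence each chosen $m$-subset, as in Figure~\ref{figure:2sided}(b). Two arrays in $\mathcal{B}$ that differ in some staircase therefore give different query answers. Hence any encoding needs at least $\log|\mathcal{B}| \ge \sigma\log{n-4\sigma\choose m}$ bits.
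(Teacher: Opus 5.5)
Your plan follows the paper's proof template: horizontally shifted regions with about $4$ guard columns per level, one staircase per value, background value $\sigma-1$, and two arrays told apart by a staircase point where they differ. Two steps fail, however, and the paper's argument has the same two holes.

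\textbf{The value $\sigma-1$ cannot carry a staircase.} This is not a lower-order issue. If a region $[1,i]\times[1,j]$ has minimum $\sigma-1$, all its entries equal $\sigma-1$ and the tie-breaking rule returns $(1,1)$. Hence $S_{\sigma-1}\subseteq\{(1,1)\}$ for every array, and at most $\sigma-1$ staircases carry information. The missing factor $\binom{n-4\sigma}{m}$ cannot be absorbed into the constant $4$, which only changes the top entry of the binomial. In fact the bound with exponent $\sigma$ is false. For $m=1$ the 2-sided queries are prefix queries. By the bit string $B$ in the proof of Theorem~\ref{thm:1d1side} ($B[1]=1$, plus at most $\sigma-1$ further ones), there are at most $\sum_{k<\sigma}\binom{n-1}{k}\le n^{\sigma-1}$ distinct answer functions. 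So $(\sigma-1)\log n+1$ bits suffice, while $\sigma\log(n-4\sigma)$ exceeds this for fixed $\sigma$ and large $n$. The paper likewise builds only the $\sigma-1$ regions $R_{\sigma-2},\dots,R_0$ yet concludes with exponent $\sigma$.

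\textbf{Independent choices per level are incompatible with non-interference.} Let $c^{\ell}_i$ be the column of the point of $S_\ell$ in row $i$. The largest column of $S_\ell$ among rows $i,\dots,m$ is $c^{\ell}_i$. So for $\ell'<\ell$, no point of $S_{\ell'}$ lies in a 2-sided region of a point of $S_\ell$ if and only if $c^{\ell'}_i>c^{\ell}_i$ for every row $i$.

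Arbitrary $m$-subsets of two column ranges of length about $n-4\sigma$, shifted by $4$, violate this. Put the row-$1$ point of $S_\ell$ at the right end of its range and that of $S_{\ell'}$ near the left end. That point of $S_\ell$ is then never an answer, and moving it anywhere to the right of $c^{\ell'}_1$ changes no answer, so your family is not injective. Your regions therefore cannot be both disjoint and large enough to hold $\binom{n-4\sigma}{m}$ staircases each. Disjointness confines each row of each level to a short window. The paper's width-$4$ parallelograms are disjoint in exactly this way, and they admit far fewer than $\binom{n-4\sigma}{m}$ monotone paths when $n\gg m$.

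No guard-cell padding rescues the product form. For $m=1$, $\sigma=3$, $n\ge 40$ you would need $(n-12)^2$ arrays with pairwise distinct answers. But there are at most $1+\binom{n}{2}<(n-12)^2$ answer functions.

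What the method does give is a count of nested tuples. Sort the $\sigma-1$ chosen columns within each row and add small offsets; this is an at most $((\sigma-1)!)^m$-to-one map from arbitrary tuples to nested ones. It yields $(\sigma-1)\log\binom{n-\sigma}{m}-m\log((\sigma-1)!)$ bits, i.e.\ $\Omega(\sigma m\log\frac{n}{\sigma m})$ for $n\ge 4\sigma m$, not $\sigma\log\binom{n-4\sigma}{m}$.
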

\begin{proof}
	
	For an $m \times n$ array and $n' = n-4\sigma$, we define $\sigma-1$ parallelogram regions $R_{\sigma-2}, R_{\sigma-3}, \dots, R_{0}$ from the leftmost part of the array, where each region $R_{\ell}$ is determined by the four positions:
	$(1, n'+4(\sigma-2-\ell-1)+1)$, $(1, n'+4(\sigma-2-\ell))$, $(m, 4(\sigma-2-\ell-1)+1)$, and $(m, 4(\sigma-2-\ell))$. 
	For each region $R_{\ell}$, we construct a staircase $S_{\ell}$, which consists of the positions in $R_{\ell}$ whose values are $\ell$. All other positions in the array are assigned the value $\sigma-1$. Let $\mathcal{A}$ be the set of all possible such arrays.
	
	Now, consider two distinct arrays $A_1$ and $A_2$ in $\mathcal{A}$ where the staircases $S_{\ell}$ differ. Denote them as $S^1_{\ell}$ and $S^2_{\ell}$, respectively. 
	Without loss of generality, let $(i_1, j_1)$ and $(i_2, j_2)$ be the leftmost positions in $S^1_{\ell} \setminus S^2_{\ell}$ and $S^2_{\ell} \setminus S^1_{\ell}$, respectively, with $i_1 < i_2$. In this case, $(i_1, j_1)$ cannot dominate any positions in $S^2_{\ell}$, implying that the answers to the query $\rmq{}(1, i_1, 1, j_1)$ on $A_1$ and $A_2$ are different. Consequently, at least $\log {|\mathcal{A}|}$ bits are necessary to answer the 2-sided RMQ on an $m \times n$ array.
	
	To derive a lower bound on the size of $\mathcal{A}$, observe that for each region $R_{\ell}$, we can define a lattice path $L_{\ell}$ that does not cross the region boundaries. This path starts at the bottom-left corner of $R_{\ell}$, moves only north or east, and ends at the top-right corner of $R_{\ell}$. Since each distinct lattice path corresponds to a distinct staircase, defined by the positions of its turning points (in a clockwise order), 
	we obtain at least 
	${n-4\sigma \choose m}$ possible lattice paths for each region $R_{\ell}$~\cite{DBLP:journals/dm/Sato83}. See Figure~\ref{figure:2sided}(b) for an example. 
	This implies that
	$\log {|\mathcal{A}|}$ is at least $\log {{n-4\sigma \choose m}^\sigma} = \Omega(\sigma m\log{\frac{n-4\sigma}{m}})$.
\end{proof} 

\subsection{3-sided queries}
In this section, we consider the case where the query range is restricted to $[1, i] \times [j_1, j_2]$ with $i \in [m]$ and $1 \le j_1 \le j_2 \le n$. Since a 2-sided query is a special case of a 3-sided query with $j_1 = 1$, Theorem~\ref{thm:2-sided_gen} implies that at least $\Omega(mn)$ bits are necessary to answer 3-sided queries. Furthermore, the following corollary shows that this space lower bound is also asymptotically tight for 3-sided queries.

\begin{corollary}\label{cor:3-sided_gen}
	For an $m \times n$ array $A$, there exists an $O(mn)$-bit data structure that answers 3-sided RMQ in $O(1)$ time. 
\end{corollary}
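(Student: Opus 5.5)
We reuse the construction from part (2) of Theorem~\ref{thm:2-sided_gen}, replacing prefix queries on the row-prefix-minimum arrays by general range queries. For each $i \in [m]$ define the 1D array $A_i[j] = \min_{k \in [i]} A[k][j]$, the column-wise minimum over the first $i$ rows. We build the $(2n+o(n))$-bit RMQ encoding of Fischer and Heun~\cite{DBLP:journals/siamcomp/FischerH11} on each $A_i$. This answers arbitrary 1D range queries $\rmq{}(j_1,j_2)$ on $A_i$ in $O(1)$ time without storing $A_i$, and costs $O(mn)$ bits in total. We also store, for each of the $n$ columns of $A$, a $(2m+o(m))$-bit RMQ encoding of that column, which adds another $O(mn)$ bits.

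To answer $\rmq{}(1,i,j_1,j_2)$, we first compute $j' = \rmq{}(j_1,j_2)$ on $A_i$ and then $i' = \rmq{}(1,i)$ on column $j'$ of $A$, and we return $(i',j')$. Both steps take $O(1)$ time.

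The only step needing care is correctness under the top-leftmost tie-breaking rule. The minimum value $v$ of $A[1..i][j_1..j_2]$ equals $\min_{j\in[j_1,j_2]} A_i[j]$. The correct answer is the position with value $v$ in the leftmost column $j^*$ that contains $v$ within the range, taking the topmost such row in that column. (If the paper's convention is row-major instead, one swaps roles; we follow the same order as Theorem~\ref{thm:2-sided_gen}.) The leftmost column attaining $v$ among the rows $[1,i]$ is exactly the leftmost index minimizing $A_i$ on $[j_1,j_2]$, which is what 1D RMQ with leftmost tie-breaking returns. The topmost row in that column attaining $v$ among rows $[1,i]$ is what the column RMQ returns. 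Hence the answer matches. This argument is the same as in Theorem~\ref{thm:2-sided_gen}, with prefix ranges replaced by general ranges.
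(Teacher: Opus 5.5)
Your proposal is correct and is essentially the paper's proof: you reuse the $O(mn)$-bit structure of Theorem~\ref{thm:2-sided_gen}(2), compute $j'$ by a range minimum query over $[j_1,j_2]$ on $A_i$, and then compute $i'$ by $\rmq{}(1,i)$ on column $j'$. Your check that this yields the leftmost-column-then-topmost-row answer is sound and spells out what the paper leaves implicit. However, drop the aside about ``swapping roles'' for a row-major convention: the 3-sided range is a prefix only in the row dimension, so the construction does not symmetrize.
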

\begin{proof}
	We construct the same data structure as in the proof of Theorem~\ref{thm:2-sided_gen} (2) using $O(mn)$ bits. To answer the 3-sided RMQ query $\rmq{}(1, i, j_1, j_2)=(i', j')$ in $O(1)$ time, we first compute $j'$  by $\rmq{}(j_1, j_2)$ on $A_i$, and compute $i'$ by $\rmq{}(1, i)$ on the $j'$-th column of $A$.  
\end{proof}
Next, consider the case where the input array $A$ is defined over an alphabet $\Sigma = \{0, \dots, \sigma-1\}$ of size $\sigma$.
For each $k \in \Sigma$, define a 1D array $C_k$ of size $n$ such that 
$C_k[j] =i$ where $i$ is the smallest row index such that (i) $A[i][j] = k$ and (ii) all preceding values in the column, $A[1][j], \dots, A[i-1][j]$, are greater than $k$.
If $j$-th column does not have the value $k$, set $C_k[j]= m+1$, and if $j$-th column has the value $k$, but does not satisfy the condition (ii), set $C_k[j]= 0$.
We maintain the arrays $C_0, \dots, C_{\sigma-1}$ along with RMQ data structures that allow queries to be answered in $O(1)$ time, using a total of $n\sigma \log m + O(\sigma n)$ bits~\cite{DBLP:journals/siamcomp/FischerH11}.

Given a 3-sided range $[1, i] \times [j_1, j_2]$ on $A$ and a value $k \in \Sigma$, let $j_k$ be the result of $\rmq{}(j_1, j_2)$ on $C_k$. Then the value $k$ exists in the range if and only if $C_k[j_k] \le i$.
Thus, we can determine whether $k$ exists in the range in $O(1)$ time and compute the 3-sided RMQ in $O(\log \sigma)$ time by performing a binary search to find the smallest $k$ that exists in the range.
We summarize this result in the following theorem.

\begin{theorem}\label{thm:3-sided_ub}
	For an $m \times n$ array defined over an alphabet of size $\sigma$, there exists a data structure of size $n\sigma \log m + O(\sigma n)$ bits that supports 3-sided RMQ in $O(\log \sigma)$ time. 
\end{theorem}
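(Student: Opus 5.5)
The plan is to check the two ingredients of the construction above, space and correctness, and then analyse the search over $k$. For space, each $C_k$ has $n$ entries in $\{0,\dots,m+1\}$, so storing it verbatim costs $n\lceil\log(m+2)\rceil = n\log m + O(n)$ bits. Next to each $C_k$ we keep a $(2n+o(n))$-bit RMQ structure~\cite{DBLP:journals/siamcomp/FischerH11} returning the leftmost minimum of any interval of $C_k$ in $O(1)$ time. Summing over the $\sigma$ values of $k$ gives $n\sigma\log m + O(\sigma n)$ bits. Reading $C_k[j]$ directly costs $O(1)$ time in the word RAM model.

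The central step is the following claim. For a range $[1,i]\times[j_1,j_2]$ and a value $k$, let $j_k=\rmq{}(j_1,j_2)$ on $C_k$, with ties broken leftmost. Then $C_k[j_k]\le i$ holds if and only if $\min_{[1,i]\times[j_1,j_2]}A\le k$.

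For the backward direction, suppose $C_k[j_k]=i'\le i$. If $i'\ge 1$, then $A[i'][j_k]=k$ lies inside the range. If $i'=0$, then column $j_k$ contains an entry smaller than $k$ above the first occurrence of $k$, so that entry lies in rows $1..i$; to be careful, I will define $C_k[j]=0$ exactly when $\min_{r\le i} A[r][j]<k$ is witnessed before any $k$. For the forward direction, suppose the range minimum value $v$ is at most $k$ and is attained in column $j$ at row $r\le i$. If $v<k$, column $j$ has an entry less than $k$ at or above row $r$. This forces $C_k[j]=0$, or $C_k[j]\le r$ if the first occurrence of $k$ comes earlier; in either case $C_k[j]\le i$. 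If $v=k$, the first $k$ in column $j$ is preceded only by values greater than $k$, so $C_k[j]\le r\le i$. Hence $C_k[j_k]\le C_k[j]\le i$.

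This gives a monotone predicate in $k$, so a binary search over $\Sigma$ finds the minimum value $k^*$ of the range in $O(\log\sigma)$ steps of $O(1)$ time each.

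The step I expect to be the main obstacle is recovering the actual position with the correct top-leftmost tie-breaking. My plan is to use $C_{k^*}$ itself: since no value below $k^*$ lies in the range, every column $j\in[j_1,j_2]$ has $C_{k^*}[j]\neq 0$ relative to rows $\le i$. Any zero entries in those columns come only from smaller values below row $i$, and these must be handled. I would therefore refine the definition so that $C_{k^*}[j]$ stores the row of the first $k$ that has no smaller value above it (otherwise $m+1$), keeping a separate zero flag only for the existence test above. The $O(1)$ RMQ on this refined array returns the topmost such occurrence, leftmost among ties. If topmost is not the intended order, I would instead select the leftmost column $j$ with entry $\le i$ using a threshold structure. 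The answer is then $(C_{k^*}[j_{k^*}], j_{k^*})$, and the space stays $n\sigma\log m+O(\sigma n)$ bits.
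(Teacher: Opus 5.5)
You follow the paper's route: one array $C_k$ per value, a $(2n+o(n))$-bit RMQ structure on each, a constant-time test per $k$, and binary search over $k$. Your space accounting is fine. The gap is your central equivalence. With the sentinels $0$ and $m+1$ as defined in the construction, ``$C_k[j_k]\le i$ iff the range minimum is at most $k$'' fails in both directions, so the binary search is not justified.

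\emph{Backward direction.} $C_k[j]=0$ only says that some value below $k$ lies above the first $k$ in column $j$, and that value may lie below row $i$. For a single column $(2,0,1)^{T}$ with $\sigma=3$ and $i=1$, you get $C_1[1]=0\le 1$, although the range minimum is $2$. Your repair makes the $0$ depend on $\min_{r\le i}A[r][j]$, which uses the query row $i$, so it cannot be precomputed and stored.

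\emph{Forward direction.} You overlook columns that contain a value $v<k$ in rows $1,\dots,i$ but no $k$ at all; there $C_k[j]=m+1$. For a single column $(0,2)^{T}$ with $\sigma=3$ and $i=1$, the test is true for $k=0$, false for $k=1$ (since $C_1[1]=3$), and true for $k=2$ (since $C_2[1]=0$). The predicate is not monotone, and a binary search that probes $k=1$ first returns $2$ instead of $0$.

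The missing idea is to change what $C_k$ stores rather than patch the sentinels. Let $C_k[j]$ be the smallest row $r$ with $A[r][j]\le k$, and $m+1$ if there is none. Then $C_k[j]\le i$ iff column $j$ has an entry $\le k$ in rows $1,\dots,i$. Hence $\min_{j_1\le j\le j_2}C_k[j]\le i$ iff the range minimum is at most $k$. Both directions are immediate, monotonicity in $k$ is automatic, and the space is still $n\lceil\log(m+1)\rceil+2n+o(n)$ bits per $k$.

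This also makes your refined arrays and zero flags unnecessary. At the minimal $k^*$, every $j\in[j_1,j_2]$ with $C_{k^*}[j]\le i$ satisfies $A[C_{k^*}[j]][j]=k^*$, and that row is the topmost $k^*$ in column $j$. So $(C_{k^*}[j_{k^*}],j_{k^*})$ is the topmost occurrence of $k^*$ in the range, leftmost among ties, which is what an RMQ on $C_{k^*}$ naturally yields. If you instead want leftmost-column-first tie-breaking, your ``threshold structure'' is unspecified. It would still need an $O(1)$-time realization within $O(n)$ extra bits per value.
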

Compared to the general case, the above data structure requires less space when $\sigma = o(m/\log m)$, and when $\sigma = O(1)$, it answers 3-sided RMQ in $O(1)$ time.

In the general case, the optimal space data structures for answering 2-sided and 3-sided RMQ require asymptotically the same space by Theorem~\ref{thm:2-sided_gen} and Corollary~\ref{cor:3-sided_gen}. However, when the alphabet size is bounded by $\sigma$, the data structure from Theorem~\ref{thm:2-sided_ub} uses asymptotically less space than the data structure of Theorem~\ref{thm:3-sided_ub} when $m = o(n)$.
Finally, we consider the space lower bound for answering 3-sided RMQ on a 2D array with a bounded alphabet. The following theorem implies that when $\sigma$ is in $O(m^{\epsilon})$ for some positive constant $\epsilon < 1$, the data structure in Theorem~\ref{thm:3-sided_ub} achieves asymptotically optimal space usage.

\begin{theorem}\label{thm:3-sided_lb}
	For an $m \times n$ array defined over an alphabet of size $\sigma < m$, at least $n\log {m-1 \choose \sigma-1} = \Omega(n \sigma \log (m/\sigma))$ bits are necessary to answer 3-sided RMQ.
\end{theorem}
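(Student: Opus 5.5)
We prove the bound by a counting argument. We build a family $\mathcal{A}$ of $m \times n$ arrays over $\Sigma$ with $|\mathcal{A}| = {m-1 \choose \sigma-1}^n$, and show that any two distinct arrays in $\mathcal{A}$ give a different answer to some 3-sided RMQ. The idea is that 3-sided queries $[1,i] \times [j,j]$ restricted to a single column already act as 1-sided queries on that column, read top to bottom. This lets every column independently carry a structure like the one in the lower-bound proof of Theorem~\ref{thm:1d1side}.

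\textbf{Construction.} For each column $j$, pick $\sigma-1$ row positions $2 \le p^j_{\sigma-1} < \dots < p^j_{1} \le m$ out of the $m-1$ rows $\{2,\dots,m\}$. Set $A[p^j_t][j] = t-1$ for $t = 1,\dots,\sigma-1$, so values decrease strictly going down the column, from $\sigma-2$ down to $0$. Set every other entry of column $j$, including row $1$, to $\sigma-1$. The choices for different columns are independent, so $|\mathcal{A}| = {m-1 \choose \sigma-1}^n$. Row $1$ is reserved for the value $\sigma-1$, so that the top-left entry of each column is never one of the chosen values. This matches the $m-1$ in the binomial, and it also guarantees that before the first chosen row the prefix minimum sits at row $1$.

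\textbf{Distinguishing.} Take $A_1 \neq A_2$ in $\mathcal{A}$ and a column $j$ where they differ. Let $l$ be the topmost row with $A_1[l][j] \ne A_2[l][j]$, and assume without loss of generality $A_1[l][j] < A_2[l][j]$. Then $A_1[l][j]$ is a chosen value, so it is strictly smaller than everything above it in the column. Hence $\rmq{}(1,l,j,j) = (l,j)$ on $A_1$.

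On $A_2$, rows $1,\dots,l-1$ coincide with $A_1$ and carry values $\le \sigma-1$. It remains to rule out that row $l$ is the answer on $A_2$. Row $1$ has value $\sigma-1$ in both arrays, so $l \ge 2$. If $A_2[l][j] = \sigma-1$, the entry at row $1$ ties with it and is the topmost, so the answer is not at row $l$. Otherwise $A_2[l][j]$ is a chosen value $v$ of $A_2$. The chosen values of $A_1$ in rows above $l$ are exactly its largest values $\sigma-2, \sigma-3, \dots$ down to $A_1[l][j]+1$, and the same rows in $A_2$ hold the same values. So if $v > A_1[l][j]$, the value $v$ already appears above row $l$ in $A_2$, contradicting that each value appears once per column. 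Hence the case $v > A_1[l][j]$ cannot arise, and the answers on $A_1$ and $A_2$ differ.

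\textbf{Bound.} Any 3-sided RMQ encoding must separate all of $\mathcal{A}$, so it needs at least $n\log{m-1 \choose \sigma-1}$ bits. The asymptotic form follows from ${a \choose b} \ge (a/b)^b$, which gives $(\sigma-1)\log\frac{m-1}{\sigma-1} = \Omega(\sigma\log(m/\sigma))$ for $\sigma \ge 2$.

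\textbf{Main obstacle.} The delicate step is the tie-breaking case analysis above, specifically that row $l$ cannot be the topmost minimum on $A_2$. The cleanest route is to phrase it as in Theorem~\ref{thm:1d1side}: within a column, the chosen rows are exactly the rows where the prefix minimum changes, so the set of 1-sided answers $\{\rmq{}(1,i,j,j) : i \in [m]\}$ determines the chosen rows. The strictly decreasing assignment then determines the values, and the column is recovered. I would use this recovery argument in place of the pairwise comparison.
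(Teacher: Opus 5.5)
Your proposal is correct and takes the same approach as the paper. The paper uses the same column family: one occurrence of each of $0,\dots,\sigma-2$, strictly decreasing down the column, with row $1$ reserved for $\sigma-1$. It takes the $n$ columns independently to get ${m-1 \choose \sigma-1}^n$ arrays, and recovers each column from the single-column queries $[1,i]\times[j,j]$ by locating the rows where the answer changes, which is the recovery argument you say you would use. Your pairwise tie-breaking argument is also valid, since a chosen value larger than $A_1[l][j]$ already occurs above row $l$ in $A_2$, so either route works.
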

\begin{proof}
	(1) Let $\mathcal{C}$ be the set of all possible 1D arrays of size $m$ such that for each $C \in \mathcal{C}$: (i) for $k \in \{0, \dots, \sigma-2\}$, $C$ has exactly one occurrence of $k$ at the position $i_k$, 
	(ii) $i_0 > i_1 > \dots > i_{\sigma-2} > 1$, and (iii) all other positions have the value $\sigma-1$.
	Then the size of $\mathcal{C}$ is ${m-1 \choose \sigma-1}$. 
	Next, let $\mathcal{B}$ be the set of all possible $m \times n$ arrays where each column is chosen from $\mathcal{C}$. The size of $B$ is ${m-1 \choose \sigma-1}^n$. 
	Moreover, any array $B \in \mathcal{B}$ can be reconstructed using 3-sided RMQ on $B$ since $j$-th column of $B$ can be reconstructed from $\rmq{}(1, i, j, j)$ for all $i \in [n]$, which proves the theorem. Specifically, for any $k \in \{0, \dots, \sigma-2\}$, the value $k$ appears at $B[i_k][j]$ where $i_k$ is the index of the $k$-th row from the bottom that satisfies $\rmq{}(1, i, j, j) \neq \rmq{}(1, i-1, j, j)$.  
\end{proof} 

\subparagraph*{Row/column spanning 2-sided queries. } 
As a special case of 3-sided queries, we define a 2-sided query where the range is restricted to $[1,m] \times [j_1,j_2]$ with $1 \le j_1 \le j_2 \le n$ (or the range $[i_1,i_2] \times [1,n]$ with $1 \le i_1 \le i_2 \le m$). Note that this type of query range is different from the one considered in Section~\ref{sec:2d2sided}.
In this section, we refer to this variant as the \textit{column spanning (or row spanning) 2-sided RMQ} on a 2D array.
We begin by establishing a lower bound on the space required to encode column spanning 2-sided RMQ on a 2D array in the following theorem.

\begin{theorem}\label{thm:alter_2-sided_lb}
	For an $m \times n$ array defined over an alphabet of size $\sigma$, the following holds:
    \begin{enumerate}[(1)]
    \item When $\sigma = 2$, at least $n \log m$ bits, and 
    \item when $2 < \sigma \le m$, at least $n \log m + n \log r_{\sigma-2} - O(\sigma \log n)$ bits are necessary to answer column spanning 2-sided RMQ on $A$, where $r_{\sigma-2}$ is the constant defined in Theorem~\ref{thm:1D-lowerbound}.
    \end{enumerate}
\end{theorem}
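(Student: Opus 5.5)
The plan is to build a family of arrays whose column spanning 2-sided RMQ answers are pairwise distinct, and whose log-size equals the claimed bound. Column spanning queries $[1,m]\times[j_1,j_2]$ behave like 1D RMQ on the array of column minima $M[j]=\min_i A[i][j]$, together with the row index of the topmost occurrence of that minimum in column $j$. So the family should encode two independent pieces of information: a 1D array $M$ that is essentially unconstrained, and, for each column, a row position.

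\textbf{Case $\sigma=2$.} Put exactly one $0$ in each column, at an arbitrary row $r_j\in[m]$, and fill everything else with $1$. This gives $m^n$ arrays. The query $\rmq{}(1,m,j,j)$ returns $(r_j,j)$, so the array can be reconstructed from the answers, and at least $n\log m$ bits are needed.

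\textbf{Case $2<\sigma\le m$.} Take any 1D array $D$ of length $n$ over $\{0,\dots,\sigma-2\}$, which is an alphabet of size $\sigma-1$. For each column $j$, choose a row $r_j\in[m]$, set $A[r_j][j]=D[j]$, and fill the rest of column $j$ with $\sigma-1$. Since $D[j]\le\sigma-2<\sigma-1$, the unique minimum of column $j$ sits at row $r_j$.

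For a query $[1,m]\times[j_1,j_2]$, the answer is $(r_{j^*},j^*)$, where $j^*$ is the 1D RMQ answer on $D[j_1,j_2]$. The leftmost tie-breaking rule matches, because each column contains exactly one candidate. Now suppose two arrays $(D,r)$ and $(D',r')$ give the same answers on all queries.

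\begin{itemize}
\item Single-column queries force $r=r'$.
\item The column components of the answers give identical 1D RMQ answers on $D$ and $D'$, so $D$ and $D'$ have the same Cartesian tree.
\end{itemize}

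Hence the number of distinguishable configurations is at least $m^n$ times the number of distinct Cartesian trees of length-$n$ arrays over an alphabet of size $\sigma-1$. By Theorem~\ref{thm:1D-lowerbound} (the Genitrini et al.\ count), the log of the latter is at least $n\log r_{\sigma-2}-O(\sigma\log n)$. Taking logs gives $n\log m+n\log r_{\sigma-2}-O(\sigma\log n)$.

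The main obstacle is the careful accounting in the second case. Distinct arrays $D$ can share a Cartesian tree, so the count must be over Cartesian tree shapes, as supplied by Theorem~\ref{thm:1D-lowerbound}, and not over all $D$. One also has to check that the row choices are independent of the shape and recoverable from the answers. The condition $\sigma\le m$ is not essential in this construction; it only keeps the statement in the nontrivial range.
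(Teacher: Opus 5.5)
Your proposal is correct and follows essentially the same route as the paper. For $2<\sigma\le m$, the paper also fixes one representative $W$ per Cartesian tree over $\{0,\dots,\sigma-2\}$, places $W[j]$ once in column $j$ at an arbitrary row with all other entries $\sigma-1$, and separates arrays using the column components of the answers together with single-column queries. It implicitly relies on the same leftmost-column tie-breaking that you state explicitly. The only difference is the case $\sigma=2$: the paper uses columns with at most one $0$, never in row $1$, so an all-$1$ column is detected by the answer landing in row $1$. Your family with exactly one $0$ per column gives the same $m^n$ count more directly.
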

\begin{proof}
    \begin{enumerate}[(1)]
    \item Let $\mathcal{C}$ denote the set of all $m \times n$ binary arrays satisfying the following conditions: (1) each column contains at most one $0$, with all other entries equal to $1$; and (2) if a column contains a $0$, it does not appear in the first row.
    It follows that $|\mathcal{C}| = m^n$. Consider any two distinct arrays $C_1, C_2 \in \mathcal{C}$, and let $l_1$ be the leftmost column in which $C_1$ and $C_2$ differ. Then the query $\rmq(1, m, l_1, l_1)$ returns distinct answers on $C_1$ and $C_2$. Note that the column consists entirely of $1$s if and only if the query returns the position in the first row by the tie-breaking rule and construction. 
    
    \item Let $\mathcal{D}$ denote the set of $m \times n$ arrays constructed as follows:
    \begin{enumerate}
        \item Let $\mathcal{W}$ be the set of all 1D arrays of length $n$ over the alphabet $\{0, \dots, \sigma-2\}$ such that any two distinct arrays in $\mathcal{W}$ yield distinct RMQ answers.
        \item For each array $W \in \mathcal{W}$, construct all possible $m \times n$ arrays as follows: for each $i \in [n]$, the $i$-th column has exactly one entry equal to $W[i]$, and all other entries are equal to $\sigma-1$. Add all such arrays to $\mathcal{D}$.
    \end{enumerate}

    By this construction and Theorem~\ref{thm:1D-lowerbound}, $\log |\mathcal{D}| \ge n \log m + n \log r_{\sigma-2} - O(\sigma \log n)$.

    Now consider any two distinct arrays $D_1, D_2 \in \mathcal{D}$. If they are derived from distinct 1D arrays in $\mathcal{W}$, their column spanning 2-sided RMQ answers are distinct by construction. Otherwise, let $l_2$ be the leftmost column in which $D_1$ and $D_2$ differ. Then the query $\rmq(1, m, l_2, l_2)$ returns distinct answers on $D_1$ and $D_2$.      
    \end{enumerate}
\end{proof}

The following theorem presents a data structure that supports column spanning 2-sided RMQ on 2D arrays efficiently. By Theorem~\ref{thm:alter_2-sided_lb}, the space usage is optimal up to a lower-order additive term when $m = \omega(1)$, for any value of $\sigma$.

\begin{theorem}\label{thm:alter_2-sided_ub}
	For an $m \times n$ array $A$ defined over an alphabet of size $\sigma$, there exists a data structure of size $\ceil{n \log m} + 2n + o(n)$ bits that can answer column spanning 2-sided RMQ in $O(1)$ time. 
\end{theorem}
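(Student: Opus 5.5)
The plan is to reduce column spanning 2-sided RMQ to a 1D RMQ on the sequence of column minima, plus the row position of each column minimum. For each column $j \in [n]$, let $M[j] = \min_{i \in [m]} A[i][j]$, and let $r_j$ be the row of the topmost occurrence of $M[j]$ in column $j$; that is, $r_j$ is the row of $\rmq{}(1,m,j,j)$.

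The key observation is that the answer to $\rmq{}(1,m,j_1,j_2)$ is $(r_{j'}, j')$, where $j'$ is the answer to $\rmq{}(j_1,j_2)$ on the 1D array $M$. We verify that tie-breaking is consistent. The 2D rule returns the top-leftmost position of the minimum. It is not immediately clear whether "top-leftmost" means row-major (topmost first) or column-major (leftmost first), and we would need to fix which. We adopt the reading consistent with the paper's 1-sided and 3-sided constructions, in which the answer column is obtained by a 1D RMQ and the row by a column RMQ, as in Corollary~\ref{cor:3-sided_gen}. Under this reading the answer lies in the leftmost column $j'$ attaining the overall minimum, at the topmost minimum row of that column. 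Since $\rmq{}$ on $M$ returns the leftmost minimum, this matches. If the convention were row-major instead, we would need an extra argument. We expect this convention issue to be the only subtle point, and we would resolve it by appealing to the paper's usage in the proof of Corollary~\ref{cor:3-sided_gen}.

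The data structure has two parts:
\begin{enumerate}[(i)]
\item The Fischer--Heun $(2n+o(n))$-bit encoding of RMQ on $M$~\cite{DBLP:journals/siamcomp/FischerH11}, which answers $\rmq{}(j_1,j_2)$ on $M$ in $O(1)$ time without storing $M$.
\item The sequence $r_1,\dots,r_n$ with $r_j \in [m]$, viewed as a single integer $\sum_j (r_j-1)m^{j-1} < m^n$, stored in $\ceil{n\log m}$ bits.
\end{enumerate}

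For constant-time access to a single $r_j$, storing $n\ceil{\log m}$ bits would be wasteful. Instead we use a standard mixed-radix representation with constant-time access, for example the Dodis--P\v{a}tra\c{s}cu--Thorup technique or Patrascu's succincter. This gives $\ceil{n\log m} + o(n)$ bits, or in the simpler form $n\log m + O(\log^2 n)$ bits, with $O(1)$ access. Alternatively, we can group $\Theta(\log n/\log m)$ consecutive values into blocks, encode each block in $\ceil{b\log m}$ bits, and store the block boundaries. This wastes $O(n\log m/\log n)$ bits, which is $o(n)$ only for small $m$, so we prefer the cited technique.

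The query $\rmq{}(1,m,j_1,j_2)$ then takes $j'$ from part (i) and returns $(r_{j'}, j')$, both in $O(1)$ time. The total space is $\ceil{n\log m} + 2n + o(n)$ bits. The main obstacle is obtaining the array $r$ in $\ceil{n\log m}+o(n)$ bits with constant-time access. We would cite the existing result for this rather than reprove it.
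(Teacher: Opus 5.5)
Your proposal is correct and essentially the same as the paper's proof. The paper stores a $(2n+o(n))$-bit Fischer--Heun RMQ encoding of the column-minimum array $W_1$ (your $M$), and stores the topmost-minimum row array $W_2$ (your $r$) in $\lceil n\log m\rceil$ bits via Dodis et al.; a query is answered as $(W_2[i_a], i_a)$, where $i_a$ is the 1D RMQ answer on $W_1$, and the paper implicitly uses the same leftmost-column-then-topmost-row tie-breaking you adopted.
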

\begin{proof}
Let $W_1$ be an array of length $n$ where, for each $i \in [n]$, $W_1[i]$ is the smallest value in the $i$-th column of $A$. Also let $W_2$ be an array of length $n$ where $W_2[i] = j$ if and only if $A[j][i] = W_1[i]$ (breaking ties by choosing the smallest $j$). The data structure consists of: (1) a $(2n+o(n))$-bit structure for answering RMQ on $W_1$ in $O(1)$ time~\cite{DBLP:journals/siamcomp/FischerH11}, and (2) the array $W_2$, stored in $\ceil{ n \log m}$ bits using the data structure of Dodis et al.~\cite{DBLP:conf/stoc/DodisPT10}. 

Using this structure, we can answer $\rmq(1, m, i_1, i_2)$ in $O(1)$ time by returning the position $(j_a, i_a)$, where $i_a$ is the answer to $\rmq(i_1, i_2)$ on $W_1$, and $j_a = W_2[i_a]$.
\end{proof}

Note that the space usage of the data structure in Theorem~\ref{thm:alter_2-sided_ub} is independent to $\sigma$. However, when $\sigma = O(1)$, we can reduce the space usage of the data structure into $\ceil{n \log m} + n \log r_{\sigma-1} + o(n)$ bits while supporting the query in $O(1)$ time by Theorem~\ref{thm:1D-upperbound}. 

\subsection{4-sided queries}
In this section, we consider the case where the query range is an arbitrary rectangular region.
When the alphabet size is unbounded, there exists a data structure using $O(\min{}(m^2n, mn\log n))$ bits that answers RMQ in $O(1)$ time, while at least $\Omega(mn \log m)$ bits are necessary for answering RMQ on $A$~\cite{DBLP:journals/tcs/BrodalDLRS16}. 
A $\Theta(mn \log m)$-bit encoding is also known~\cite{DBLP:conf/esa/BrodalBD13}, although it does not support the queries efficiently. Here, we focus on the case the case where the input array $A$ is defined over a bounded alphabet $\Sigma = \{0, \dots, \sigma-1\}$ of size $\sigma$.
First, we give a lower bound on the space required for the data structure, as stated in the following theorem.

\begin{figure}[htp]
	\begin{center}
		\includegraphics[scale=0.25]{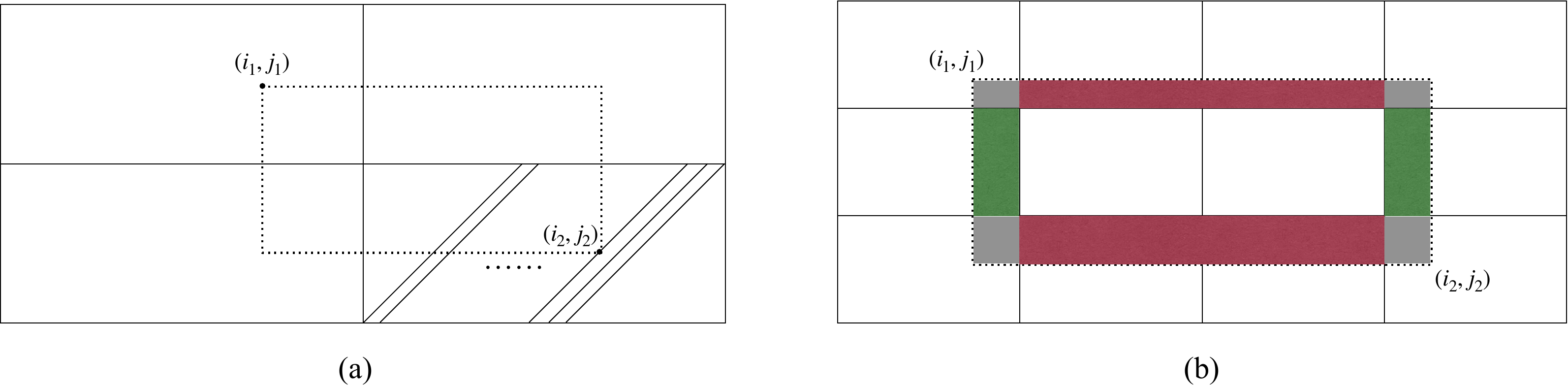}
	\end{center}
	\caption{(a) A single block of size $\sqrt{\sigma} \times 2\sqrt{\sigma}$ of an $m \times n$ array in the set $\mathcal{A}$ used in the proof of Theorem~\ref{thm:4-sided_lb}, (b) to answer RMQ in the rectangular range (represented by the dotted line): 
		(i) the candidate answer in the gray range can be found using RMQ on individual blocks, (ii) in the red range using RMQ on $A_r$, (iii) in the green range using RMQ on $A_c$, and (iv) in the white range using RMQ on $A_{rc}$.}
	\label{figure:4sided}
\end{figure}

\begin{theorem}\label{thm:4-sided_lb}
	For an $m \times n$ array defined over an alphabet $\{0, \dots, \sigma\}$ of size $\sigma+1$, 
	at least $\Omega(mn \log \sigma)$ bits are necessary to answer 4-sided RMQ, assuming $\sqrt{\sigma} \le \min{}(m, n/2)$.
\end{theorem}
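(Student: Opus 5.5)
Let $s = \lfloor\sqrt{\sigma}\rfloor$. The plan follows the Figure~\ref{figure:4sided}(a) construction: tile the array into disjoint blocks of size $s \times 2s$, and fix one lower-bound gadget inside each block. Because $s \le \min(m, n/2)$, there are $\lfloor m/s\rfloor \cdot \lfloor n/(2s)\rfloor = \Omega(mn/\sigma)$ blocks. Leftover rows and columns get the value $\sigma$. The goal is to show that each block independently admits $2^{\Omega(\sigma\log\sigma)}$ configurations, all distinguishable by 4-sided RMQs confined to that block. Multiplying over the blocks then gives $\Omega(mn/\sigma)\cdot\Omega(\sigma\log\sigma)=\Omega(mn\log\sigma)$ bits.

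The gadget in a block is a permutation-based construction, in the spirit of the $\Omega(mn\log m)$ lower bound of Brodal et al., scaled down to an $s\times 2s$ array using about $\sigma$ distinct values. The left $s\times s$ half is filled with fixed, distinct "backbone" values that increase along a diagonal order. The right $s\times s$ half holds $s^2 \approx \sigma$ further cells. The aim is for the relative order of the cells on one side (for instance, one row or diagonal compared against the others), encoded as a permutation $\pi$, to be recoverable from the answers of queries whose rectangles straddle both halves.

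A simpler variant may suffice. Assign the $s^2$ cells of the right half distinct values from $\{0,\dots,\sigma-1\}$ according to an arbitrary bijection; this gives $(s^2)! = 2^{\Omega(\sigma\log\sigma)}$ choices. Then recover the full order by RMQs alone. This requires, for any two cells $x, y$ in the block, a rectangle whose only candidate minima are $x$ and $y$. That is exactly why the backbone is needed: the backbone values are chosen larger than everything in the right half, so they never interfere. The remaining task is to arrange things so that the comparison between $x$ and $y$ can still be isolated. A rectangle spanning $x$ and $y$ necessarily contains other right-half cells, so I would restrict to an antidiagonal-like family of cells. For such a family, the pairwise-comparison rectangles contain no other members of the family, and these cells carry the free permutation. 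Inside a block of width $2s$, I aim for $\Theta(s)$ disjoint such families, each of size $\Theta(s)$, to obtain $(s!)^{\Theta(s)} = 2^{\Theta(\sigma\log\sigma)}$ configurations per block.

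Independence across blocks is straightforward. Every decoding query lies inside a single block, so the answers determine each block separately, and the total count is the product over blocks.

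The main obstacle is designing the per-block gadget so that $\Theta(s^2)$ free cells yield $\Theta(\sigma\log\sigma)$ bits while every needed comparison is isolated by a rectangle avoiding the other free cells. This is where the width $2s$ (rather than $s$) should be exploited. I would first try placing one free "column" of $s$ cells with a free permutation per row-shift, that is, the $j$-th antidiagonal of a sheared $s\times 2s$ layout. This is analogous to the parallelogram regions used in Theorem~\ref{thm:2-sided_lb}, where shearing makes each diagonal family pairwise separable by rectangles.
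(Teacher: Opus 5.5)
Your tiling into $\lfloor\sqrt\sigma\rfloor\times 2\lfloor\sqrt\sigma\rfloor$ blocks, the $\Omega(mn/\sigma)$ block count, and the product argument over independent blocks are fine. However, the per-block gadget carries the whole proof, and it is never constructed. The ``simpler variant'' you sketch cannot work, for three reasons.

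\begin{itemize}
\item \emph{The backbone is inert.} If every backbone value exceeds every free value, no query whose rectangle contains a free cell ever returns a backbone cell. So the backbone carries no information about the configuration.
\item \emph{The isolation requirement is unsatisfiable.} The rectangle spanned by $(i,j)$ and $(i+k,j-k)$ contains every anti-diagonal cell between them. More generally, by Erd\H{o}s--Szekeres, any five free cells in distinct rows and columns contain three of which one lies in the bounding rectangle of the other two.
\item \emph{Comparisons among free cells cannot pin down a permutation.} Suppose the free cells on an anti-diagonal are compared only with one another, all other cells being larger. Every rectangle meets that anti-diagonal in a contiguous segment. The queries therefore reveal only the 1D RMQ answers along it, i.e.\ its Cartesian tree. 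That is fewer than $2s$ bits, not the $\log s!=\Theta(s\log s)$ bits you need, and shearing does not change this.
\end{itemize}

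In addition, an arbitrary bijection on the whole right half removes any control over which free cell is the minimum of a given rectangle.

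The missing idea is that the backbone must act as a set of \emph{rulers}. Its values must interleave with the free values, and the geometry must guarantee that for every backbone cell $b$ and free cell $x$, the rectangle with corners $b$ and $x$ has only $b$ and $x$ as candidate minima.

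The paper achieves this by splitting each $\sqrt\sigma\times2\sqrt\sigma$ block into four $\frac{\sqrt\sigma}{2}\times\sqrt\sigma$ sub-blocks:
\begin{itemize}
\item The odd values go in increasing row-major order into the upper-left sub-block. Then $b$, the upper-left corner of the rectangle, is its smallest backbone cell.
\item The even values go on $\frac{\sqrt\sigma}{2}$ anti-diagonals of the bottom-right sub-block. Each anti-diagonal has a \emph{fixed} value set, larger than the sets of anti-diagonals further right, and only the order within each anti-diagonal is free. Then $x$, the bottom-right corner, is the smallest free cell in the rectangle, because every other free cell in it lies on an anti-diagonal further up-left.
\item Every other cell gets $\sigma$.
\end{itemize}
This gives $\bigl((\sqrt\sigma/2)!\bigr)^{\sqrt\sigma/2}=2^{\Omega(\sigma\log\sigma)}$ configurations per block. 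If $A_1\neq A_2$, some free cell $x$ has $A_1[x]<A_2[x]$. Let $b$ be the backbone cell holding an odd value strictly between them. The query on the rectangle with corners $b$ and $x$ returns $x$ on $A_1$ and $b$ on $A_2$.

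With your left/right split, a backbone cell need not lie above-left of the free cell it is compared with. For pairs where it lies lower, this two-candidate guarantee is lost, so free cells in the upper rows would have too few usable rulers. The paper's quadrant layout, with the backbone entirely above-left of the free region, avoids this.
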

\begin{proof}
	Our proof is analogous to the $\Omega(mn \log m)$-bit lower bound proof of Brodal et al.~\cite{DBLP:journals/tcs/BrodalDLRS16}. 
	First, divide an input array into blocks of size $\sqrt{\sigma} \times 2\sqrt{\sigma}$, and further divide each block into four sub-blocks of size $\frac{\sqrt{\sigma}}{2} \times \sqrt{\sigma}$. 
	Next, assign values to each block using the following procedure: 
	\begin{itemize}
		\item Assign the odd values from $\{0, \dots, \sigma-1\}$ to the upper-left sub-block in increasing order, following a row-major order.
		\item Assign the even values from $\{0, \dots, \sigma-1\}$ to the $\frac{\sqrt{\sigma}}{2}$ anti-diagonals in the bottom-right sub-block, such that the values in each anti-diagonal are larger than those in the anti-diagonals to the right. 
		\item Assign $\sigma$ to all the remaining positions in the block. 
	\end{itemize}
	
	Let $\mathcal{A}$ be the set of all $m \times n$ arrays where values are assigned according to the above procedure.
	Since, for any array $A \in \mathcal{A}$, each anti-diagonal within any sub-block of $A$ forms a permutation, 
	and there are $\frac{mn}{2\sigma}$ blocks in $A$, the total size of $\mathcal{A}$ is at least $(\sqrt{\sigma}/2)!^{\frac{\sqrt{\sigma}}{2} \cdot \frac{mn}{2\sigma}} = (\sqrt{\sigma}/2)!^{\frac{mn}{4\sqrt{\sigma}}}$. Thus, $\log {|\mathcal{A}|}$ is $\Omega(mn \log \sigma)$.
	Now, consider two distinct arrays $A_1$ and $A_2$ in $\mathcal{A}$ where $A_1[i_2, j_2] < A_2[i_2, j_2]$. From the assignment procedure, the position $(i_2, j_2)$ lies on an anti-diagonal in the bottom-right sub-block of some block (note that $A_1[i, j] = A_2[i, j]$ for every position $(i,j)$ in the other three sub-blocks of every block).
	There exists another position $(i_1, j_1)$ in the upper-left sub-block of the same block such that $A_1[i_2, j_2] < A_2[i_1, j_1] < A_2[i_2, j_2]$ (see Figure~\ref{figure:4sided}(a) for an example). 
	Thus, $\rmq{}(i_1, j_1, i_2, j_2)$ on $A_1$ returns $(i_2, j_2)$, while the same query on $A_2$ returns $(i_1, j_1)$, This implies that at least $|\mathcal{A}|$ different $m \times n$ arrays have distinct RMQ answers.
\end{proof}

Theorem~\ref{thm:4-sided_lb} implies that if $\sqrt{\sigma} \le \min{}(m, n/2)$, an $(nm \ceil{\log \sigma} + O(mn))$-bit indexing data structure (i.e., a data structure that explicitly stores the input array) of Brodal et al.~\cite{DBLP:journals/tcs/BrodalDLRS16}, which answers  RMQ in $O(1)$ time, uses asymptotically optimal space for the bounded-alphabet case.
This means that in most scenarios, restricting the alphabet size provides little advantage in terms of space efficiency compared to the general case.
However, when $\sigma = O(1)$, the $O(mn)$ term in the above indexing data structure of \cite{DBLP:journals/tcs/BrodalDLRS16} can dominate the space for storing the input. In the following theorem, we show that this can be improved to $o(mn)$ bits while still supporting $O(1)$ query time\footnote{In fact, the result of \cite{DBLP:journals/tcs/BrodalDLRS16} provides a trade-offs between index size and query time - if only $o(mn)$-bit space is used for the index, the query time increases to $\omega(1)$.}.

\begin{theorem}\label{thm:4-sided_ub}
	For an $m \times n$ array $A$ defined over an alphabet of size $\sigma = O(1)$, 
	there exists a data structure of size $mn \ceil{\log \sigma} + o(mn)$ bits that supports 4-sided RMQ in $O(1)$ time.
\end{theorem}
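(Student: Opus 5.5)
We store the array $A$ explicitly in row-major order using $mn\ceil{\log \sigma}$ bits, so that any entry, and any run of $\Theta(\log(mn))$ consecutive bits, can be read in $O(1)$ time. On top of it we build $o(mn)$ bits of auxiliary structures following the block decomposition of Figure~\ref{figure:4sided}(b).

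Fix a block side length $b$ with $b^2\ceil{\log\sigma} \le \frac{1}{4}\log(mn)$, i.e.\ $b = \Theta(\sqrt{\log(mn)})$ since $\sigma = O(1)$. Partition $A$ into $b\times b$ blocks. The auxiliary structures are the following.
\begin{enumerate}[(i)]
\item A precomputed table indexed by the contents of a block and by a query rectangle inside the block. It returns the in-block RMQ answer and the corresponding value. It uses $2^{\frac14\log(mn)}\cdot b^4\cdot O(\log b) = o(mn)$ bits. A block's contents are fetched from the stored array in $O(1)$ time by reading its $b$ row segments. To keep this $O(1)$, I would instead lay out $A$ block-major (block by block), which still uses $mn\ceil{\log\sigma}$ bits.
\item Compressed arrays $A_r$, $A_c$ and $A_{rc}$. $A_r$ (size $m\times n/b$) stores, for each row and each block-column, the minimum of the length-$b$ row segment. $A_c$ (size $m/b \times n$) does the same for column segments. $A_{rc}$ (size $m/b\times n/b$) stores block minima.
\end{enumerate}

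A query rectangle splits into at most nine parts. The four corners are partial blocks, handled by the table. The top and bottom strips are partial rows of blocks spanning whole block-columns, answered by RMQ on $A_c$ restricted to those rows. The left and right strips similarly use $A_r$. The interior uses $A_{rc}$. Since the values lie in $\{0,\dots,\sigma-1\}$, we take the minimum over the nine candidates, breaking ties by the top-left rule. To recover the exact position inside a segment or block, we fetch that segment or block from $A$ and use the table.

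The main obstacle is answering 4-sided RMQ on $A_r$, $A_c$ and $A_{rc}$ in $O(1)$ time within $o(mn)$ bits. $A_{rc}$ has $mn/b^2$ entries, so the Brodal et al.\ $O(N\log N)$-bit structure on it uses $O(mn\log(mn)/b^2)$ bits. With $b^2 = \Theta(\log(mn))$ this is $O(mn)$, which is too large.

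I would fix this by recursing twice, i.e.\ using a two-level decomposition. Take outer blocks of side $b_1 = \log(mn)$. Applying the $O(N\min(m,\log N))$-bit structure to the outer block-minima array costs $O(mn/\log(mn))$ bits. For $A_r$ at the outer level, each row of $A_r$ has $n/b_1$ entries. For RMQ over a range of rows within $A_r$, I would view it as an $m\times(n/b_1)$ array and reuse the structure of \cite{DBLP:journals/algorithmica/BrodalDR12}, costing $O(\min(m^2, m\log n)\cdot n/b_1)$ bits, which is $O(mn\log n/\log(mn)) = O(mn)$ bits, still too large. So I would instead take $b_1 = \log^2(mn)$. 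That makes the outer-level structures $O(mn/\log(mn))$ bits. The strips then only need handling in regions at most $b_1$ wide.

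Within such a strip we repeat the decomposition with $b_2 = \Theta(\log\log(mn))$ and store local answers with $O(\log b_1)$-bit relative offsets. This costs $O(mn\log\log(mn)/b_2^{\,\cdot})$ bits, and I expect the exponent of $b_2$ in the denominator can be tuned to make this $o(mn)$. The last level uses blocks of $O(\log(mn))$ bits handled by the table.

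Verifying that every level costs $o(mn)$ bits, and that the tie-breaking composes correctly across levels, is where the real work lies.
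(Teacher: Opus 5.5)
\textbf{Gap: the space bound for $A_r$, $A_c$, $A_{rc}$ is never established.} This is the step you yourself single out: $O(1)$-time 4-sided RMQ on these three arrays within $o(mn)$ bits. The multi-level scheme you sketch in its place fails as parametrized. At the inner level, the strip arrays (the analogues of $A_r$ and $A_c$ inside each outer block) have about $mn/b_2$ entries in total. Both the general-alphabet encodings you invoke and the $O(\log b_1)$-bit relative offsets you propose cost $\Theta(\log\log(mn))$ bits per entry. So these structures take $\Theta(mn\log\log(mn)/b_2)$ bits. The exponent of $b_2$ is exactly $1$, so there is nothing to tune. With your choice $b_2=\Theta(\log\log(mn))$ this is $\Theta(mn)$, not $o(mn)$. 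Taking $b_2=\Theta(\sqrt{\log(mn)})$, with the table as the bottom level, would repair the count. You would still owe an argument that those encodings run in $O(1)$ time on such small instances, plus the offset and tie-breaking bookkeeping across levels. As written, the space bound is not proved.

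\textbf{Missing idea: index the reduced arrays rather than encode them.} This is what the paper does, and it makes recursion unnecessary. Because $\sigma=O(1)$, every entry of $A_r$, $A_c$, $A_{rc}$ is itself a symbol of $\Sigma$. Store these arrays explicitly with $\ceil{\log\sigma}$ bits per entry. On each, build the indexing structure of Brodal et al.~\cite{DBLP:journals/tcs/BrodalDLRS16}: given access to an $N'$-entry array, it answers 4-sided RMQ in $O(1)$ time using only $O(N')$ additional bits. This costs $O(mn/b)$ bits for $A_r$ and $A_c$ and $O(mn/b^2)$ bits for $A_{rc}$. All are $o(mn)$ since $b=\Theta(\sqrt{\log(mn)})=\omega(1)$. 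The exact position inside the returned segment or block is read off with your table, and the rest of your single-level construction goes through unchanged. Your $O(mn\log(mn)/b^2)$ estimate for $A_{rc}$ came from paying $\Theta(\log N')$ bits per entry to encode positions. Storing the cheap values plus an $O(1)$-bit-per-entry index avoids exactly that cost.

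\textbf{Two smaller points.}
\begin{enumerate}[(1)]
\item You have the strips reversed. The top and bottom strips (partial rows, whole block-columns) are served by $A_r$, and the left and right strips by $A_c$.
\item Tie-breaking under row-major order needs one extra step. $A_c$ and $A_{rc}$ only reliably give the minimum value and the first block-row containing it. Finish with one more query inside that block-row: the table for a left or right strip, and $A_r$ plus the table for the interior. This needs no extra space.
\end{enumerate}
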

\begin{proof}
	Let $c = \frac{1}{2}\sqrt{\log_{\sigma} mn}$. Then we partition $A$ into blocks of size $c \times c$ and construct the following structures:
	\begin{itemize}
		\item To efficiently handle queries whose range falls entirely within a single block, we store an index data structure from Brodal et al.~\cite{DBLP:journals/tcs/BrodalDLRS16} for each block as follows. 
		We first store a precomputed table of size $2^{O(c^2 \log \sigma)} = O((mn)^{1/4})$, which contains all possible index structures from \cite{DBLP:journals/tcs/BrodalDLRS16} for blocks of size $c \times c$. 
		Since each block can be represented as an index into this precomputed table using $\frac{1}{4}\log mn$ bits, 
		the total space required for storing indices of all blocks is $\frac{mn}{4c^2} \cdot \log mn = mn\ceil{\log \sigma}$ bits. As we maintain these index structures, we can also access any position in the array in $O(1)$ time.
		\item Next, we define an auxiliary $m \times n/c$ array $A_r$, 
		where each entry is given by $A_r[i][j] = A[i][j']$, 
		where $j'$ is a column position of $\rmq{}(i, i, (j-1)c+1, jc)$ on $A$. 
		We maintain the index data structure of \cite{DBLP:journals/tcs/BrodalDLRS16} on $A_r$, using $O(\frac{mn\log \sigma}{c}) = o(mn)$ bits. This allows us to access any position in $A_r$ in $O(1)$ time and answer RMQ on $A$ in $O(1)$ time when both the leftmost and rightmost columns of the query range align with block boundaries. Note that the exact column position of the answer can be determined using the RMQ structure on individual blocks.
		
		Similarly, we define an auxiliary $m/c \times n$ array $A_c$ and maintain the same structure using $o(mn)$ bits
		so that we can answer RMQ on $A$ in $O(1)$ time when both the topmost and bottommost rows in the query range align with block boundaries. 
		
		\item  Finally, let $A_{rc}$ be an $m/c \times n/c$ array, where each entry is given by $A_{rc}[i][j] = A[i'][j']$ where $(i', j')$ is $\rmq{}((i-1)c+1, ic, (j-1)c+1, jc)$ on $A$. 
		We maintain the index data structure of \cite{DBLP:journals/tcs/BrodalDLRS16} on $A_{rc}$ using $O(\frac{mn\log\sigma}{c^2}) = o(mn)$ bits. This allows us to access any position in $A_{rc}$ in $O(1)$ time and answer RMQ on $A$ in $O(1)$ time when all boundaries of the query range align with block boundaries.
	\end{itemize}
	Any rectangular query range on $A$ can be partitioned into the following regions:
	(i) at most four rectangular regions fully contained within a single block,
	(ii) at most two rectangular regions where the leftmost and rightmost columns align with block boundaries,
	(iii) at most two rectangular regions where the topmost and bottommost rows align with block boundaries, and
	(iv) at most one rectangular region where all four boundaries align with block boundaries.
	Using the structures described above, we can determine the position and value of the minimum element in each partitioned region in $O(1)$ time.  
	See Figure~\ref{figure:4sided}(b) for an example.
	The final result is obtained in $O(1)$ time by returning the position of the minimum element among them.
\end{proof}

\section{Conclusions}\label{sec:concl}
In this paper, we studied encoding data structures for RMQ on 1D and 2D arrays when the alphabet size is bounded. Most of our data structures use asymptotically optimal space across a wide range of alphabet sizes.
We also compare our results to the general case with no restrictions on the alphabet size.
For 1D arrays, the space usage differs only slightly, even for a constant-sized alphabet. 
For 2D arrays, the data structure asymptotically requires less space compared to the general case over a wider range of alphabet sizes, from both upper and lower bound perspectives. 
For example, for an $n \times m$ 2D array with $m \le n$, we showed that the space usage can be improved for all types of queries when the alphabet size $\sigma = o(m/\log m)$.

It would be interesting to explore other parameters, such as compressed matrix size, that could further improve space usage for encoding data structures for RMQ. All of our upper and lower bounds use the fact that ties are broken using the tie-breaking rule. We note that in the case of the NLV (nearest larger value) problem (which is closely related to RMQ), Hoffman et al.~\cite{DBLP:journals/tcs/HoffmannINR18} have obtained different encoding upper and lower bounds based on different tie-breaking rules. It is interesting to see if similar results can be shown for RMQ. 
\bibliography{ref}
\end{document}